\pgfplotsset{compat=1.15}
\newcommand{\bTheta}{\boldsymbol{\Theta}}
\title{{\bf Weakest Bidder Types and New Core-Selecting Combinatorial Auctions}}
\author{Siddharth Prasad\and Maria-Florina Balcan\and Tuomas Sandholm}
\date{}
\begin{document}

\maketitle

\begin{abstract}
  Core-selecting combinatorial auctions are popular auction designs that constrain prices to eliminate the incentive for any group of bidders---with the seller---to renegotiate for a better deal. They help overcome the low-revenue issues of classical combinatorial auctions. We introduce a new class of core-selecting combinatorial auctions that leverage bidder information available to the auction designer. We model such information through constraints on the joint {\em type space} of the bidders---these are constraints on bidders' private valuations that are known to hold by the auction designer before bids are elicited. First, we show that type space information can overcome the well-known impossibility of incentive-compatible core-selecting combinatorial auctions. We present a revised and generalized version of that impossibility result that depends on how much information is conveyed by the type spaces. We then devise a new family of core-selecting combinatorial auctions and show that they minimize the sum of bidders' incentives to deviate from truthful bidding. We develop new constraint generation techniques---and build upon existing quadratic programming techniques---to compute core prices, and conduct experiments to evaluate the incentive, revenue, fairness, and computational merits of our new auctions. Our new core-selecting auctions directly improve upon existing designs that have been used in many high-stakes auctions around the world. We envision that they will be a useful addition to any auction designer's toolkit.
\end{abstract}

\section{Introduction}

The design of \textit{combinatorial auctions (CAs)} is a complex task that requires careful engineering along several axes to best serve the application at hand. Just some of these axes are: taming cognitive and communication costs of eliciting and understanding bidders' inherently combinatorial valuations, tractable computation and optimization of economically efficient outcomes that allocate resources to those that value them the most, and determining prices that simplify bidders' incentives while generating acceptable revenues for the seller. These complexities are most evident in fielded applications of CAs including sourcing~\citep{Sandholm13:Very,Hohner03:Combinatorial,Sandholm06:Changing}, spectrum allocation~\citep{cramton2013spectrum,leyton2017economics}, treasury auctions~\citep{klemperer2010product}, and others.

The focus of the present paper is on better pricing rules for CAs. The classical{\em~\citet{Vickrey61:Counterspeculation}-\citet{Clarke71:Multipart}-\citet{Groves73:Incentives} (VCG) mechanism} is an economically efficient CA that is {\em incentive compatible (IC)}---a property of great practical importance since it levels the playing ground for bidders by making it worthless to strategize about their individual bids. But, the VCG auction has two major complementary issues (among others~\citep{Ausubel06:Lovely}) that prevent it from being practically viable: low revenue and prices that are not in the {\em core}. The latter means that some bidders might end up paying so little for their winnings that others who offered more for those same items would take issue. {\em Core-selecting CAs} fix this issue with prices that ensure no coalition of bidders plus the seller would want to renegotiate for a better deal, but these give up on incentive compatibility. Therefore, most core-selecting CA designs use core prices that minimize bidders' incentives to deviate from truthful bidding.

Core-selecting CAs have been used to auction licenses for wireless spectrum by a number of countries' governments including Australia, Canada, Denmark, Ireland, Mexico, the Netherlands, Portugal, Switzerland, the United Kingdom, and others, generating many billions of dollars in revenue~\citep{cramton2013spectrum,palacios2024combinatorial}. \citet{ausubel2017market} review some of the key design choices of the FCC incentive auction that was completed in the United States in 2017. They suggest that some instances of winners paying zero for certain packages despite losers bidding competitively~\citep{ausubel2023vcg} could have been avoided with a core-selecting payment rule instead of the VCG rule adopted by the FCC (though a core-selecting rule would have introduced other practical difficulties in other stages of the auction). While the most prominent real-world deployment of core-selecting CAs is probably spectrum auctions, their use has been proposed for other important applications such as electricity markets~\citep{karaca2019core}, advertisement markets~\citep{goetzendorff2015compact,niazadeh2022fast}, and auctions for wind farm development rights~\citep{ausubel2011auction}.

In this paper we introduce a new class of core-selecting CAs that improve upon prior  designs by taking advantage of bidder information available to the auction designer through constraints on the bidders' {\em type spaces}. Our starting point is the {\em weakest-type VCG (WT) auction}, which is a type-space-dependent improvement of VCG~\citep{krishna1998efficient,balcan2023bicriteria}. Our core-selecting CAs build upon the WT auction, and minimize the sum of bidders' incentives to deviate from truthful bidding. They generalize and improve upon the core-selecting CA designs that have been developed in the literature so far, some of which have been successfully used in spectrum auctions~\citep{Day07:Fair,day2010optimal,day2012quadratic,erdil2010new}.

\subsection{Our Contributions}

First, we show that information expressed by type spaces can overcome the following well-known impossibility result due to~\citet{Othman10:Envy} and~\citet{goeree2016impossibility}: under unrestricted type spaces either (i) VCG is not in the core in which case no IC core-selecting CA exists or (ii) VCG is the unique IC core-selecting CA. In general CAs where bidders' valuations exhibit complementarities (that is, the value of a bundle is more than the sum of its parts), VCG is typically not in the core. VCG is in the core only under strict conditions on bidder valuations that rule out complementarity (like buyer-submodularity or gross-substitutes~\citep{Ausubel02:Ascending}). We provide a revised and more general version of the impossibility result. Our result (Theorem~\ref{theorem:impossibility}) states that either (i) WT is not in the core in which case no IC core-selecting CA exists, (ii) WT is the unique IC core-selecting CA, or (iii) there are infinitely many IC core-selecting CAs including WT (and we characterize all such CAs). In particular, vanilla VCG has no bearing on the existence of IC core-selecting CAs (when type spaces are unrestricted VCG and WT are identical, so our result recovers the one by~\citet{Othman10:Envy} and~\citet{goeree2016impossibility} in that case).

Second, we devise a new family of type-space-dependent core-selecting CAs that minimize the sum of bidders' incentives to deviate from truthful bidding. Typical core-selecting CAs choose prices that lie on the minimum-revenue face---referred to as the {\em minimum-revenue core (MRC)}---of the core polytope~[\citealp{Parkes01:Achieving}, \citealp{Day07:Fair}, \citealp{day2010optimal}, \citealp{erdil2010new}, \citealp{day2012quadratic}].~\citet{day2010optimal} show that MRC points minimize bidders' total incentive to deviate from truthful bidding (and therefore minimize incentives to deviate in a Pareto sense as well). Our new design chooses core prices that minimize revenue subject to the additional constraint that they lie above WT. We generalize Day and Milgrom's result (which hinges on the assumption of unrestricted typespaces), and show that our revised version of the minimum-revenue core provides optimal incentives for bidders.

Third, we develop new constraint generation routines for computing WT prices. We compare two linear programming formulations of WT price computation: one is due to~\citet{balcan2023bicriteria} and the other is based on~\citet{Bikhchandani02:Package}. Both linear programs have an exponential number of constraints, so we develop constraint generation routines to solve them. In our experiments, the~\citet{balcan2023bicriteria} formulation leads to significantly smaller constraint-generation solve times and iterations. 
On most instances, WT price computation via our constraint generation routine only adds a modest run-time overhead to the cost of winner determination.

Finally, we present proof-of-concept experiments that evaluate the incentive, revenue, and fairness properties of our new core-selecting CAs. We coin and implement three new core-selecting payment rules that select payments on our revised MRC. Our implementation uses the quadratic programming and core-constraint generation technique developed by~\citet{day2012quadratic}.

\subsection{Related Work}

\paragraph{Weakest types}

The notion of a weakest type consistent with an agent's type space originates from the seminal works of~\citet{Myerson83:Efficient} and~\citet{cramton1987dissolving} in the context of efficient trade. It was first presented in an auction context by~\citet{krishna1998efficient}, and later modified by~\citet{balcan2023bicriteria} to derive revenue guarantees that depend on measures of informativeness of the type space. The weakest-type auction has found applications in other mechanism design settings (like digital goods auctions) as well~\citep{lu2024competitive}.

\paragraph{Equilibrium bidding strategies in core-selecting CAs}

As core-selecting CAs are not generally incentive compatible, there is a sizable literature that studies bidding strategies and equilibrium outcomes in core-selecting CAs. Such work has generally been limited to very small CA instances with numbers of items and bidders both in the single digits. ~\citet{goeree2016impossibility} derive equilibrium strategies for the core-selecting CA of~\citet{day2012quadratic} and show that {\em revealed} core prices can be further away from the {\em true} core than VCG. ~\citet{ausubel2020core} are more optimistic and demonstrate the opposite phenomenon, providing more justification for the use of core-selecting CAs in practice. \citet{bichler2013core} run lab experiments to study bidding behavior and efficiency of the core-selecting combinatorial clock auction format.~\citet{ott2013incentives} study overbidding equilibria that can arise in core-selecting CAs.


\paragraph{Core-selecting CA design and computation}
\citet{erdil2010new} introduce the idea of using ``reference points'' other than VCG~\citep{day2012quadratic} to find closest MRC prices. \citet{bunz2022designing} perform a computational evaluation of different core-selecting payment rules that differ in their underlying reference point. Their focus is on computing equilibrium bidding strategies (using modern Bayes-Nash equilibrium solvers~\citep{bosshard2017computing}) to evaluate true efficiency, and therefore their evaluation is limited to very small auction instances. These works study the properties of different core points that lie on the same minimum-revenue core. We redefine the minimum-revenue core to depend on the type space information known to the auction designer.

\citet{bunz2015faster} provide improvements to the original core-constraint generation algorithms of~\citet{Day07:Fair} and~\citet{day2012quadratic}. ~\citet{niazadeh2022fast} develop non-exact algorithms that converge to core prices (though their experimental evaluation is in a not-fully-combinatorial advertising setting where winner determination is in $\mathsf{P}$, in contrast to the general CA setting where winner determination is $\mathsf{NP}$-complete). Generalizing their algorithms to take advantage of type space information is an interesting direction for future research. \citet{goel2015core,markakis2019core} devise incentive compatible CAs that approximate the core revenue. A drawback of this line of work is that it sacrifices efficiency, which is one of the main tenets that motivates the need for core-selecting CAs in the first place. \citet{goetzendorff2015compact} design new bidding languages for auctions with many items and respective techniques for core pricing; \citet{moor2016core} study core-selecting auctions when some items might no longer be available after the auction is run; \citet{Othman10:Envy} develop an iterative core-selecting CA that elicits bids over multiple rounds.

\paragraph{Core selection beyond CAs} Some work has studied the design of core-selecting mechanisms in markets beyond auctions. Examples include combinatorial exchanges~\citep{rostek2015core, bichler2017core}, reallocative mechanisms like the FCC incentive auction~\citep{rostek2023reallocative}, and markets with financially-constrained buyers~\citep{batziou2022core,bichler2022core}.

\section{Problem Formulation and Background on Core-Selecting CAs}

In a combinatorial auction (CA) there is a set $M=\{1,\ldots,m\}$ of indivisible items to be auctioned off to bidders $N = \{1,\ldots,n\}$ who can submit bids for distinct bundles (or packages) of items. Bidder $i$ reports to the auction designer her valuation $v_i:2^M\to\R_{\ge 0}$ that encodes the maximum value $v_i(S)$ she is willing to pay for every distinct bundle of goods $S\subseteq M$. Let $\vec{v} = (v_1,\ldots, v_n)$ denote the valuation profile of all bidders, and let $\vec{v}_{-i} = (v_1,\ldots, v_{i-1}, v_{i+1},\ldots, v_n)$ denote the profile of bids excluding bidder $i$. For $C\subseteq N$ let $\vec{v}_C = (v_j)_{j\in C}$ and let $\vec{v}_{-C} = (v_j)_{j\in N\setminus C}$. We assume bidders report their valuations in the XOR bidding language~[\citealp{Sandholm99:Algorithm}, \citealp{Nisan00:Bidding}], under which a bidder can only win at most one of the bundles she explicitly placed a nonzero bid for. For bidder $i$, let $B_i\subseteq 2^M$ be the set of bundles she bid on (assume for notational convenience that each bidder $i$ implicitly submits $v_i(\emptyset) = 0$). Let $\Gamma=\Gamma(B_1,\ldots, B_n)\subseteq B_1\times\cdots\times B_n$ denote the set of feasible allocations, that is, the set of partitions $S_1,\ldots, S_n$ of $M$ with $S_i\in B_i$ for each $i$ and $S_i\cap S_j=\emptyset$ for each $i, j$. We use boldface $\vec{S}=(S_1,\ldots,S_n)\in\Gamma(B_1,\ldots, B_n)$ to denote a feasible allocation.


Before bids/valuations are submitted, bidder $i$'s valuation $v_i$, also called her {\em type}, is her own private information. The auction designer might have some prior information about the bidders, and that is modeled by the {\em joint type space} of the bidders, denoted $\bTheta\subseteq\bigtimes_{i\in N}\R^{2^m}$. The auction designer knows that  $\vec{v}\in\bTheta$. Given $\vec{v}_{-i}$, let $\Theta_i(\vec{v}_{-i}) = \{\hat{v}_i : (\hat{v}_i, \vec{v}_{-i})\in \bTheta\}$ be the projected type space of bidder $i$. So, after seeing the revealed bids $\vec{v}_{-i}$ of all other bidders, the auction designer knows $v_i\in\Theta_i(\vec{v}_{-i})$. This model of type spaces begets a rich and expressive language of bidder information available to the auction designer---$\bTheta$ can represent any statement of the form ``the joint valuation profile $\vec{v}$ of all bidders satisfies property $P$'' (\citet{balcan2023bicriteria} provide concrete examples). The typical assumption in mechanism design is an unrestricted type space $\Theta = \bigtimes_{i\in N}\R^{2^m}_{\ge 0}$ (what is usually assumed is the existence of a known prior distribution over the type space). In contrast, we will be concerned with explicit representations of the auction designer's knowledge via the type space and how that influences both practical computation and the auction design itself.

\subsubsection*{Auction design desiderata} An auction is determined by its allocation rule and its payment rule. In this paper we are concerned with {\em efficient auctions}. An efficient auction selects the efficient (welfare-maximizing) allocation: $$\vec{S}^*=(S_1^*,\ldots, S_n^*) = \argmax_{\vec{S}\in \Gamma}\sum_{j\in N}v_j(S_j).$$ The {\em winner determination problem} of computing the efficient allocation is {\sf NP}-complete (by a reduction from weighted set packing), but solving its integer programming formulation is generally a routine task for modern integer programming solvers. Let $w(\vec{v}) = \max_{S\in\Gamma}\sum_{j\in N} v_j(S_j)$ denote the efficient welfare. An auction is {\em incentive compatible (IC)} if each bidder's utility (value minus payment) is weakly maximized by truthful bidding, independent of other bidders. An auction is {\em individually rational (IR)} if truthful bidders are always guaranteed non-negative utility, independent of other bidders.


\subsubsection*{Vickrey-Clarke-Groves (VCG) Auction} The classical auction due to~\citet{Vickrey61:Counterspeculation},~\citet{Clarke71:Multipart}, and~\citet{Groves73:Incentives} (VCG) chooses the efficient allocation $\vec{S}^*$, and charges bidder $i$ a payment of $$p^{\texttt{VCG}}_i(\vec{v}) = w(0, \vec{v}_{-i}) - \sum_{j\neq i}v_j(S_j^*).$$ Let $\vec{p}^{\texttt{VCG}}=(p_1^{\texttt{VCG}},\ldots,p_n^{\texttt{VCG}})$ denote the vector of VCG payments. VCG is incentive compatible and individually rational. Generally, to implement the VCG auction one must solve winner determination $n+1$ times---once to compute $w(\vec{v})$ and the efficient allocation, and once per bidder to compute $w(0, \vec{v}_{-i})$ in the formula for $p_i$.

\subsubsection*{Weakest-Type VCG (WT) Auction} The weakest-type VCG (WT) auction~\citep{krishna1998efficient,balcan2023bicriteria} chooses the efficient allocation $\vec{S}^*$ achieving welfare $w(\vec{v})$ and charges bidder $i$ a payment of \begin{equation}\label{eq:wcvcg_payment}p^{\WT}_i(\vec{v}) = \min_{\widetilde{v}_i\in \Theta_i(\vec{v}_{-i})}w(\widetilde{v}_i, \vec{v}_{-i}) - \sum_{j\neq i}v_j(S_j^*).\end{equation}
In Equation~\eqref{eq:wcvcg_payment}, the bid vector $\widetilde{v}_i$ achieving the minimum is the {\em weakest type} in $\Theta_i(\vec{v}_{-i})$. Let $\vec{p}^{\WT} = (p_1^{\WT},\ldots, p_n^{\WT})$ denote the vector of WT payments. If $\Theta = \bigtimes_{i\in N}\R^{2^m}_{\ge 0}$, so $\Theta_i(\vec{v}_{-i}) = \R_{\ge 0}^{2^m}$ conveys no information about bidder $i$, the weakest competitor $\widetilde{v}_i$ is the bidder who bids zero on all packages, and $p_i^{\WT} = p_i^{\VCG}$. On the other extreme, if $\Theta = \{\vec{v}\}$, $p_i^{\WT} = v_i(S_i^*)$ and WT extracts the total social surplus as revenue. For $\bTheta$ closed and convex, WT is revenue maximizing among all efficient, IC, and IR auctions~\citep{krishna1998efficient,balcan2023bicriteria}. (The intuition behind this fact is that WT closes the slack in the IR constraint left open by vanilla VCG---the weakest type's IR constraint is binding under WT.) A weaker Bayes-IC and Bayes-IR version of the WT auction was first introduced by~\citet{krishna1998efficient}. The version we work with here is due to~\citet{balcan2023bicriteria}.

\subsubsection*{Core-Selecting CAs and the Minimum-Revenue Core} Let $W = \{i\in N : S_i^*\neq\emptyset\}$ be the set of winning bidders in the efficient allocation $\vec{S}^*$. A combinatorial auction is in the {\em core} if (i) it chooses the efficient allocation $\vec{S}^*$ and (ii) prices $\vec{p}$ lie in the {\em core polytope}, defined by {\em core constraints} for every group of winning bidders and IR constraints:
\begin{equation}\label{eq:core}
    \core(\vec{v}) = \left\{\vec{p}\in\R^W :
    \begin{aligned}
        & \sum_{i\in W\setminus C}p_i\ge w(\vec{0}, \vec{v}_C) - \sum_{j\in C}v_j(S_j^*)~\forall C\subseteq N, \\
        & v_i(S_i^*) - p_i\ge 0\;\;\forall i\in W
    \end{aligned}\right\}.
\end{equation} This formulation of the core gives rise to a direct interpretation of core prices as ``group VCG prices'': any set of winners must in aggregate pay the externality they impose on the other bidders (our formulation is not the typical formulation of the core, which is a notion originally from cooperative game theory, but is most convenient from an implementation/mathematical programming perspective as in~\citet{Day07:Fair,day2012quadratic,bunz2015faster}). When $W\setminus C = \{i\}$ is a singleton, the core constraint reads $p_i\ge p_i^{\texttt{VCG}}$.

The {\em minimum-revenue core (MRC)} is the set $\MRC=\argmin\{\norm{\vec{p}}_1: \vec{p}\in\core\}$ that consists of all core prices of minimal revenue.~\citet{Day07:Fair,day2010optimal} show that the MRC captures exactly the set of core prices that minimize the sum of bidders' incentives to deviate from truthful bidding. The MRC is not unique and there can be (infinitely) many MRC prices. Some core-selecting CAs that select unique MRC points that have been proposed are VCG nearest~\citep{day2012quadratic}, which finds the MRC point closest in Euclidean distance to VCG, and zero nearest~\citep{erdil2010new}, which finds the MRC point closest in Euclidean distance to the origin.

Since core-selecting CAs are in general not IC, a core-selecting CA only guarantees that prices are in the {\em revealed} core with respect to reported bids. But, from a regulatory viewpoint, the revealed core is nonetheless a useful solution concept since core constraints prevent any group of bidders from lodging a meaningful complaint based on their actual bids~\citep{bunz2022designing}.

\section{Impossibility of IC Core-Selecting CAs}\label{sec:impossibility}

We revisit the following dichotomy for core-selecting CAs when type spaces are unrestricted~\citep{goeree2016impossibility, Othman10:Envy}: either (i) VCG is not in the core which implies no IC core-selecting auction exists or (ii) VCG is in the core and is the unique IC core-selecting auction. That dichotomy relies on the assumption that $\Theta$ is unrestricted, that is, $\Theta = \R^{2^m}_{\ge 0}$. We revise and generalize that result to depend on bidders' type spaces. The proof relies on the revenue optimality of WT prices subject to efficiency, IC, and IR~\citep{balcan2023bicriteria,krishna1998efficient}.

\begin{theorem}\label{theorem:impossibility}
    Let $\bTheta$ be closed and convex.
    Let $\vec{v}$ be the vector of bidders' true valuations. If $\vec{p}^{\WT}(\vec{v})\notin\core(\vec{v})$, no incentive compatible core-selecting CA exists. Otherwise, let $\mathfrak{C}\subseteq 2^N$ be the set of core constraints that $\vec{p}^{\WT}$ satisfies with equality. Let $\mathfrak{C}' = \{C'\subseteq N : C'\cap C=\emptyset~\forall C\in\mathfrak{C}\}$ and for $C'\in\mathfrak{C}'$ let $$s(C') = \sum_{i\in W\setminus C'}p_i^{\WT} - w(\vec{0},\vec{v}_{C'}) + \sum_{j\in C'}v_j(S_j^*)$$ be the slack of the $C'$-core constraint. Then for any $C'\in\mathfrak{C}'$ all prices in the set $$\left\{\left(\vec{p}^{\WT}_{W\cap C'}-\vec{\varepsilon}, \vec{p}^{\WT}_{W\setminus C'}\right) : \norm{\vec{\varepsilon}}_1\le s(C'),\vec{\varepsilon}\in\R_{\ge 0}^{W\cap C'}\right\}$$ are in the core and are attainable via an incentive compatible CA.
\end{theorem}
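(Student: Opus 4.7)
The plan leverages a per-bidder revenue-optimality property of WT that strengthens the global revenue optimality cited in the paper: among all efficient, IC, and IR mechanisms, the WT payment $p_i^{\WT}(\vec{v})$ is pointwise maximal for each winner $i$. This follows by combining the Groves-form characterization of IC plus efficient payments with IR imposed over the projected type space---any such mechanism has $p_i(\vec{v}) = h_i(\vec{v}_{-i}) - \sum_{j\neq i}v_j(S_j^*)$ with $h_i(\vec{v}_{-i}) \le \min_{\widetilde{v}_i\in\Theta_i(\vec{v}_{-i})}w(\widetilde{v}_i,\vec{v}_{-i})$, and WT exactly saturates this IR-induced ceiling.

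For the impossibility direction, I would argue by contradiction. Suppose $\vec{q}$ were an IC core-selecting CA; since every core-selecting CA is efficient and IR, the pointwise maximality gives $q_i \le p_i^{\WT}$ for every winner $i$. The lower-bound core inequalities $\sum_{i\in W\setminus C}q_i \ge w(\vec{0},\vec{v}_C)-\sum_{j\in C}v_j(S_j^*)$ then lift to $\vec{p}^{\WT}$ because it dominates $\vec{q}$ coordinate-wise, and the upper-bound IR inequalities $v_i(S_i^*) \ge p_i^{\WT}$ follow from WT's IR applied to the true type $v_i\in\Theta_i(\vec{v}_{-i})$. Together these place $\vec{p}^{\WT}$ in $\core(\vec{v})$, contradicting the hypothesis.

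For the characterization, I would parameterize every IC plus IR plus efficient mechanism by $p_i(\vec{v}) = p_i^{\WT}(\vec{v}) - \varepsilon_i(\vec{v}_{-i})$ with $\varepsilon_i \ge 0$, which preserves the Groves structure and hence IC. The core-selecting requirement then becomes the linear system $\sum_{i\in W\setminus C}\varepsilon_i \le s(C)$ for every $C\subseteq N$, where $s(C)$ denotes the slack of the $C$-core constraint at $\vec{p}^{\WT}$. Tight constraints $C\in\mathfrak{C}$ (where $s(C)=0$) pin most coordinates to zero by forcing $\varepsilon_i=0$ for all $i\in W\setminus C$. For each $C'\in\mathfrak{C}'$, I would verify the theorem's set by a case analysis over $C\subseteq N$: the disjointness $C\cap C'=\emptyset$ for tight $C$ is what shields a reduction $\vec{\varepsilon}$ supported on $W\cap C'$ from the tight constraints, while the scalar cap $\|\vec{\varepsilon}\|_1\le s(C')$ absorbs the aggregate reduction against the remaining non-tight inequalities; attainability as an IC mechanism is immediate from the Groves form. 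The hard part is precisely this last verification---establishing that a single scalar bound in terms of $s(C')$ (rather than a separate bound per constraint) suffices, which hinges on the combinatorial interaction between $\mathfrak{C}$ and $\mathfrak{C}'$ and on a careful accounting of how the support of $\vec{\varepsilon}$ intersects each $W\setminus C$ as $C$ ranges over tight and non-tight coalitions.
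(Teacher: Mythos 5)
Your overall strategy matches the paper's: both halves rest on the same two ingredients, namely the bidder-wise payment optimality of $\vec{p}^{\WT}$ among efficient, IC, IR mechanisms, and the attainability of downward perturbations of WT via Groves mechanisms with pivot terms that depend only on $\vec{v}_{-i}$. Your impossibility argument is complete and is just the contrapositive phrasing of the paper's: any IC core-selecting CA is dominated coordinate-wise by $\vec{p}^{\WT}$, the group constraints lift upward, IR gives the upper bounds, so $\vec{p}^{\WT}\in\core(\vec{v})$. For attainability, your parameterization $p_i = p_i^{\WT} - \varepsilon_i(\vec{v}_{-i})$ is a mild generalization of the paper's construction, which specifically interpolates the pivot term $h_i(\vec{v}_{-i}) = t_i\, w(0,\vec{v}_{-i}) + (1-t_i)\min_{\widetilde{v}_i\in\Theta_i(\vec{v}_{-i})} w(\widetilde{v}_i,\vec{v}_{-i})$ with $t_i\in[0,1]$; that specific choice buys IR for free (payments stay between VCG and WT) and automatically respects the singleton core constraints $p_i\ge p_i^{\VCG}$, which your unrestricted $\varepsilon_i\ge 0$ does not.

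The one place you stop short is the verification that $\vec{\varepsilon}\ge 0$ supported on $W\cap C'$ with $\norm{\vec{\varepsilon}}_1\le s(C')$ keeps the price vector in $\core(\vec{v})$. You flag this as ``the hard part'' and do not carry it out; the paper dispatches it with the phrase ``by construction'' and supplies no more detail than you do, so you are not missing an argument that the paper actually gives. Your instinct to be careful here is warranted: under the literal indexing of core constraints by coalitions $C$ (with the price sum running over $W\setminus C$), the condition $C\cap C'=\emptyset$ for tight $C$ places $W\cap C'$ \emph{inside} $W\setminus C$, so disjointness as written exposes the reduction to the tight constraint rather than shielding it from it; the shielding you describe only works if $\mathfrak{C}$ and $\mathfrak{C}'$ are read in terms of the constraint supports $W\setminus C$. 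Likewise, the single scalar budget $s(C')$ must be checked against every non-tight constraint whose support meets $W\cap C'$, not just the $C'$-constraint. So: same approach as the paper, first half complete, second half incomplete at exactly the step the paper itself leaves implicit.
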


\begin{proof}
    If $\vec{p}^{\WT}\notin\core(\vec{v})$, it must be the case that for any $\vec{p}\in\core(\vec{v})$ there exists $i$ such that $p_i > p_i^{\WT}$. This means no IC core-selecting CA can exist because $\vec{p}^{\WT}$ is bidder-wise payment optimal subject to efficiency, IC, and IR~\citep{balcan2023bicriteria}.

    If $\vec{p}^{\WT}\in\core(\vec{v})$, the price vector $(\vec{p}^{\WT}_{W\cap C'}-\vec{\varepsilon}, \vec{p}^{\WT}_{W\setminus C'})$ is also in the core for any $\vec{\varepsilon}$ with $\norm{\vec{\varepsilon}}_1\le s(C')$ by construction. We now argue that there exists an IC auction that yields these prices.
    Consider the efficient Groves mechanism that uses pivot terms $$h_i(\vec{v}_{-i}) = t_i\cdot w(0, \vec{v}_{-i}) + (1-t_i)\cdot \min_{\widetilde{v}_i\in\Theta_i(\vec{v}_{-i})}w(\widetilde{v}_i,\vec{v}_{-i})$$ where $t_i\in [0,1]$ is a parameter that does not depend on $i$'s revealed type $v_i$. Such a Groves mechanism is IC and, since it produces payments between VCG and WT, IR. By continuity, there exist parameters $\vec{t} = ((t_i)_{i\in W\cap C'}, \vec{0})$ so that the Groves mechanism produces prices $(\vec{p}^{\WT}_{W\cap C'}-\vec{\varepsilon}, \vec{p}^{\WT}_{W\setminus C'})$.
\end{proof}

Theorem~\ref{theorem:impossibility} implies that if WT is in the core, there is a potential continuum of IC core-selecting payment rules obtained by decreasing WT prices along non-binding faces of the core. In particular, the existence of IC core-selecting CAs does not depend on VCG prices but on WT prices. WT and VCG coincide when type spaces do not convey sufficient information about the additional welfare created by a bidder: $p^{\WT}_i = p^{\VCG}_i$ if and only if $\min_{\widetilde{v}_i\in\Theta_i(\vec{v}_{-i})}w(\widetilde{v}_i,\vec{v}_{-i}) = w(0, \vec{v}_{-i})$, which says that the information conveyed by $\Theta_i(\vec{v}_{-i})$ about bidder $i$ is so weak that it cannot even guarantee that $i$'s presence adds any nonzero welfare to the auction. In this case, Theorem~\ref{theorem:impossibility} recovers the impossibility result of~\citet{Othman10:Envy} and~\citet{goeree2016impossibility}.

\section{Our New Core-Selecting CAs and their Properties}\label{sec:incentives}

In this section we introduce our new class of core-selecting CAs based on weakest types, and prove that it provides bidders with optimal incentives (by minimizing the sum of bidders' incentives to deviate, therefore providing optimal incentives in a Pareto sense as well) among all core-selecting CAs. Our result generalizes the result of~\citet{day2010optimal} which was in the setting of unrestricted type spaces (our result recovers theirs in the unrestricted case).

In Section~\ref{sec:impossibility} we have shown above that if WT is not in the core, then all core-selecting CAs necessarily violate incentive compatibility. To measure the incentive violations of a core-selecting CA, we borrow the notion of an incentive profile from~\citet{day2010optimal}. The {\em utility profile} (resp., {\em deviation profile}) of an efficient CA with payment rule $\vec{p}(\vec{v})$ is given by $\{\mu_i^{\vec{p}}(\vec{v})\}_{i\in W}$ (resp. $\{\delta_i^{\vec{p}}(\vec{v})\}_{i\in W}$), where $$\mu_i^{\vec{p}}(\vec{v}) = \max_{\widehat{v}_i} \left(v_i(\hat{S}_i) - p_i(\widehat{v}_i,\vec{v}_{-i})\right)$$ is bidder $i$'s maximum obtainable utility from misreporting and $$\delta_i^{\vec{p}}(\vec{v}) = \max_{\widehat{v}_i}\left(v_i(\hat{S}_i) - p_i(\widehat{v}_i,\vec{v}_{-i})\right) - \left(v_i(S_i^*) - p_i(v_i,\vec{v}_{-i})\right) = \mu_i^{\vec{p}}(\vec{v}) - \left(v_i(S_i^*) - p_i(v_i,\vec{v}_{-i})\right)$$ is bidder $i$'s maximum utility gain over truthful bidding ($\hat{\vec{S}}$ denotes the efficient allocation under reported bid profile $(\hat{v}_i, \vec{v}_{-i})$). Our goal is to define core-selecting payment rules $\vec{p}$ that minimize the sum of bidders' incentives to deviate, which is precisely $\sum_i \delta_i^{\vec{p}}(\vec{v})$. The quantity $\delta_i^{\vec{p}}$ can be viewed as a form of ex-post regret for truthful bidding for bidder $i$.
Throughout this section, $\vec{v}$ denotes the true valuations of the bidders.

The following lemma generalizes~\citet[Theorem 3.2]{Day07:Fair}; its proof is identical to theirs.

\begin{lemma}\label{lemma:max_utility}
    Let $\hat{\vec{p}}$ be any payment rule that implements the efficient allocation such that $\hat{p}_i\ge p_i^{\WT}$. Then, $\mu_i^{\hat{\vec{p}}}(\vec{v})\le v_i(S_i^*) - p_i^{\WT}(\vec{v})$ and $\delta_i^{\hat{\vec{p}}}(\vec{v})\le \hat{p}_i(\vec{v}) - p_i^{\WT}(\vec{v})$. That is, the maximum utility winner $i$ can obtain by misreporting under $\hat{\vec{p}}$ is no more than her utility under $\vec{p}^{\WT}$.
\end{lemma}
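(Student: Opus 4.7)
The plan is to adapt the Day--Raghavan argument to WT by reducing bidder $i$'s deviation utility under $\hat{\vec{p}}$ to her deviation utility under $\vec{p}^{\WT}$ via the pointwise hypothesis $\hat{p}_i \geq p_i^{\WT}$, and then exploiting the Groves-style structure of WT to conclude that no misreport can strictly help $i$. Concretely, I would first rewrite the truthful WT utility in welfare-gap form. Combining $v_i(S_i^*) = w(\vec{v}) - \sum_{j\neq i} v_j(S_j^*)$ with the definition of $p_i^{\WT}$ in \eqref{eq:wcvcg_payment} gives
$$v_i(S_i^*) - p_i^{\WT}(\vec{v}) = w(\vec{v}) - \min_{\widetilde{v}_i\in\Theta_i(\vec{v}_{-i})} w(\widetilde{v}_i,\vec{v}_{-i}).$$

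Next, I fix an arbitrary misreport $\hat{v}_i$ and let $\hat{\vec{S}} = \hat{\vec{S}}(\hat{v}_i,\vec{v}_{-i})$ be the resulting efficient allocation. The pointwise assumption $\hat{p}_i \geq p_i^{\WT}$ immediately yields $v_i(\hat{S}_i) - \hat{p}_i(\hat{v}_i,\vec{v}_{-i}) \leq v_i(\hat{S}_i) - p_i^{\WT}(\hat{v}_i,\vec{v}_{-i})$. The crucial observation is that the weakest-type welfare term appearing in $p_i^{\WT}(\hat{v}_i,\vec{v}_{-i})$ depends on the projected type space $\Theta_i(\vec{v}_{-i})$, which is determined entirely by $\vec{v}_{-i}$; it is therefore invariant under $i$'s own report. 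Unfolding the payment, the right-hand side becomes $\sum_{j\in N} v_j(\hat{S}_j) - \min_{\widetilde{v}_i\in\Theta_i(\vec{v}_{-i})} w(\widetilde{v}_i,\vec{v}_{-i})$, which is at most $w(\vec{v})$ minus the same minimum because $\hat{\vec{S}}$ is a feasible allocation evaluated at the true valuation profile $\vec{v}$. Combining with the first display gives $v_i(\hat{S}_i) - \hat{p}_i(\hat{v}_i,\vec{v}_{-i}) \leq v_i(S_i^*) - p_i^{\WT}(\vec{v})$. Taking the maximum over $\hat{v}_i$ establishes the first claim, and then subtracting $v_i(S_i^*) - \hat{p}_i(\vec{v})$ from both sides establishes the second.

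The only non-routine step is recognizing that $\Theta_i(\vec{v}_{-i})$ does not depend on bidder $i$'s submitted bid, so the weakest-type subsidy embedded in $p_i^{\WT}$ behaves exactly like a Groves pivot term. This is the drop-in replacement for Day--Raghavan's use of the VCG pivot, and it is the reason their argument extends verbatim to the type-space-dependent setting considered here.
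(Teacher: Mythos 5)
Your proof is correct and takes essentially the same route as the paper's: both arguments hinge on the pointwise bound $\hat{p}_i \ge p_i^{\WT}$ to reduce a profitable deviation under $\hat{\vec{p}}$ to a (weakly more) profitable deviation under $\vec{p}^{\WT}$, and then conclude via incentive compatibility of WT. The only difference is presentational---the paper argues by contradiction and cites IC of WT as a known fact, whereas you prove the needed inequality directly by unfolding the weakest-type pivot term (correctly noting it depends only on $\vec{v}_{-i}$) and using feasibility of the misreport's allocation; both are valid.
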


\begin{proof}
    Suppose for the sake of contradiction that there is a misreport $v_i'$ for bidder $i$ that gives her utility more than $v_i(S_i^*) - p_i^{\WT}(\vec{v})$, that is, $v_i (S_i') - \hat{p}_i(v_i',\vec{v}_{-i}) > v_i(S_i^*) - p_i^{\WT}(\vec{v})$ where $\vec{S}'$ is the efficient allocation for bid profile $(v_i',\vec{v}_{-i})$. Since $\hat{p}_i\ge p_i^{\WT}$, $v_i(S_i') - p_i^{\WT}(v_i', \vec{v}_{-i})\ge v_i(S_i') - \hat{p}_i (v_i', \vec{v}_{-i})$, which, combined with the above, yields $v_i(S_i') - p_i^{\WT}(v_i', \vec{v}_{-i}) > v_i(S_i^*) - p_i^{\WT}(\vec{v})$. Incentive compatibility of WT is violated, a contradiction.
\end{proof}

The following result is an adaptation of~\citet[Theorem 2]{day2010optimal}; the proof is similar to theirs.

\begin{theorem}\label{theorem:best_response}
    Let $\hat{\vec{p}}$ be any IR payment rule that implements the efficient allocation such that $\hat{p}_i \ge p_i^{\WT}$. Let $v_i'$ denote the misreport for winner $i$ defined by $v_i'(S_i^*) = p_i^{\WT}(\vec{v})$, $v_i'(S) = 0$ for all $S\neq S_i^*$. Then, $v_i'$ is a best response for $i$ that gives her utility equal to $v_i(S_i^*) - p_i^{\WT}(\vec{v})$. That is, under $\hat{\vec{p}}$, winner $i$ can always guarantee herself utility equal to what her utility would have been under $\vec{p}^{\WT}$.
\end{theorem}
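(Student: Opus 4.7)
The plan is to combine individual rationality with Lemma~\ref{lemma:max_utility} to sandwich bidder $i$'s payment at the misreport $v_i'$ on both sides. The crucial preliminary step is verifying that under the bid profile $(v_i', \vec{v}_{-i})$ the efficient allocation still assigns $S_i^*$ to bidder $i$; once this is in hand, IR gives an upper bound of $p_i^{\WT}(\vec{v})$ on $\hat{p}_i(v_i', \vec{v}_{-i})$ and Lemma~\ref{lemma:max_utility} gives a matching lower bound, forcing equality and delivering exactly the claimed utility.

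First I would show that $\vec{S}^*$ remains welfare-optimal under $(v_i', \vec{v}_{-i})$. By construction, the welfare of $\vec{S}^*$ under the new bids equals
$$v_i'(S_i^*) + \sum_{j \ne i} v_j(S_j^*) \;=\; p_i^{\WT}(\vec{v}) + \sum_{j \ne i} v_j(S_j^*) \;=\; \min_{\widetilde{v}_i \in \Theta_i(\vec{v}_{-i})} w(\widetilde{v}_i, \vec{v}_{-i}),$$
by the definition of $p_i^{\WT}$ in Equation~\eqref{eq:wcvcg_payment}. Since $v_i'$ is zero on every bundle other than $S_i^*$, any competing allocation effectively leaves bidder $i$ with the empty bundle and has welfare at most $w(0, \vec{v}_{-i})$. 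Monotonicity of $w$ in bidder $i$'s reported valuation (every $\widetilde{v}_i \in \Theta_i(\vec{v}_{-i})$ is pointwise non-negative) yields $\min_{\widetilde{v}_i \in \Theta_i(\vec{v}_{-i})} w(\widetilde{v}_i, \vec{v}_{-i}) \ge w(0, \vec{v}_{-i})$, so the welfare of $\vec{S}^*$ matches or exceeds that of any alternative and bidder $i$ wins $S_i^*$.

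The sandwich is now immediate. Individual rationality of $\hat{\vec{p}}$ applied to the bid $v_i'$ gives $\hat{p}_i(v_i', \vec{v}_{-i}) \le v_i'(S_i^*) = p_i^{\WT}(\vec{v})$. In the other direction, Lemma~\ref{lemma:max_utility} yields
$$v_i(S_i^*) - \hat{p}_i(v_i', \vec{v}_{-i}) \;\le\; \mu_i^{\hat{\vec{p}}}(\vec{v}) \;\le\; v_i(S_i^*) - p_i^{\WT}(\vec{v}),$$
hence $\hat{p}_i(v_i', \vec{v}_{-i}) \ge p_i^{\WT}(\vec{v})$. Therefore bidder $i$'s utility from reporting $v_i'$ is exactly $v_i(S_i^*) - p_i^{\WT}(\vec{v})$, which by Lemma~\ref{lemma:max_utility} is the maximum utility achievable from any report, so $v_i'$ is a best response. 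The main subtlety lies in the allocation step: if $\vec{S}^*$ ties with another allocation under $(v_i', \vec{v}_{-i})$ one must appeal to the tie-breaking convention of the allocation rule to ensure that bidder $i$ actually receives $S_i^*$, which is consistent with the hypothesis that $\hat{\vec{p}}$ implements the efficient allocation.
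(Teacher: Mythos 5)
Your proof is correct and follows essentially the same route as the paper's: show that reporting $v_i'$ preserves the allocation (the paper justifies this via $p_i^{\WT}\ge p_i^{\VCG}$, you via an explicit welfare comparison), then apply IR at the misreported profile together with Lemma~\ref{lemma:max_utility}. The only cosmetic difference is that you make the resulting equality $\hat{p}_i(v_i',\vec{v}_{-i})=p_i^{\WT}(\vec{v})$ explicit as a two-sided sandwich, whereas the paper derives only the lower bound on utility and invokes the lemma to conclude it is a best response.
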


\begin{proof}
    Reporting $v_i'$ does not change the efficient allocation since $p_i^{\WT}\ge p_i^{\VCG}$ (and $i$'s VCG price is her lowest possible misreport that preserves her winning bundle). So, the IR constraint for $\hat{\vec{p}}$ requires $v_i'(S_i^*) - \hat{p}_i(v_i', \vec{v}_{-i})\ge 0$. Expanding the left-hand side yields $v_i'(S_i^*) - \hat{p}_i(v_i', \vec{v}_{-i}) = p_i^{\WT}(\vec{v}) - \hat{p}_i(v_i', \vec{v}_{-i}) = v_i(S_i^*) - (w(v_i, \vec{v}_{-i}) - \min_{\widetilde{v}_i\in\Theta_i(\vec{v}_{-i})} w(\widetilde{v}_i,\vec{v}_{-i})) - \hat{p}_i(v_i', \vec{v}_{-i})$. So, $v_i(S_i^*) - \hat{p}_i(v_i', \vec{v}_{-i}) \ge w(v_i, \vec{v}_{-i}) - \min_{\widetilde{v}_i\in\Theta_i(\vec{v}_{-i})} w(\widetilde{v}_i,\vec{v}_{-i})$. The right-hand side is precisely $i$'s utility under $\vec{p}^{\WT}$. By Lemma~\ref{lemma:max_utility}, this constitutes a best response.
\end{proof}

Theorem~\ref{theorem:best_response} allows us to characterize the subset of points that minimize the sum of bidders' incentives to deviate of any upwards closed region. They are exactly the set of points of minimal revenue. Given a price vector $\hat{\vec{p}}\in\R^W$ and any closed region $\cA\subseteq\R^W$ , let $$\mathsf{MR}_{\cA}(\hat{\vec{p}}) = \argmin\left\{\norm{\vec{p}}_1 : \vec{p}\in\cA, \hat{\vec{p}}\le\vec{p}\le (v_i(S_i^*))_{i\in W}\right\}$$ be the set of IR price vectors in $\cA$ of minimal revenue that lie above $\hat{\vec{p}}$.

\begin{theorem}\label{theorem:general_uc}
    Let $\cA\subseteq\R^W$ be upwards closed. Then $$\mathsf{MR}_{\cA}(\vec{p}^{\WT})\subseteq\argmin\left\{\sum_{i\in W}\delta_i^{\vec{p}}(\vec{v}) : \vec{p}\in \cA\right\}.$$
\end{theorem}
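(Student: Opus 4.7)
The plan is to mirror the two-step structure of Day and Milgrom's proof~\citep{day2010optimal} in the WT setting: I will evaluate $\sum_{i\in W}\delta_i^{\vec{p}^*}(\vec{v})$ exactly for an arbitrary $\vec{p}^*\in\mathsf{MR}_{\cA}(\vec{p}^{\WT})$, then show that the same quantity lower-bounds $\sum_i\delta_i^{\vec{p}}(\vec{v})$ for every IR $\vec{p}\in\cA$.

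For the first step I fix $\vec{p}^*\in\mathsf{MR}_{\cA}(\vec{p}^{\WT})$, so $\vec{p}^{\WT}\le\vec{p}^*\le(v_i(S_i^*))_{i\in W}$ by construction. Lemma~\ref{lemma:max_utility} yields $\delta_i^{\vec{p}^*}(\vec{v})\le p_i^*-p_i^{\WT}$ for each winner $i$. For the matching lower bound, Theorem~\ref{theorem:best_response} shows that the misreport $v_i'$ with $v_i'(S_i^*)=p_i^{\WT}(\vec{v})$ and $v_i'(S)=0$ otherwise delivers utility $v_i(S_i^*)-p_i^{\WT}(\vec{v})$ under $\vec{p}^*$, so $\mu_i^{\vec{p}^*}(\vec{v})\ge v_i(S_i^*)-p_i^{\WT}$ and $\delta_i^{\vec{p}^*}(\vec{v})\ge p_i^*-p_i^{\WT}$. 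Summing gives the exact identity $\sum_{i\in W}\delta_i^{\vec{p}^*}(\vec{v})=\|\vec{p}^*\|_1-\|\vec{p}^{\WT}\|_1$.

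For the second step I consider any IR payment rule $\vec{p}$ whose prices at every profile lie in $\cA$. The argument of Theorem~\ref{theorem:best_response}, applied at the off-equilibrium profile $(v_i',\vec{v}_{-i})$, uses only IR at that profile (together with $p_i^{\WT}\ge p_i^{\VCG}$ to preserve the efficient allocation) to conclude $\mu_i^{\vec{p}}(\vec{v})\ge v_i(S_i^*)-p_i^{\WT}(\vec{v})$, and hence $\delta_i^{\vec{p}}(\vec{v})\ge p_i(\vec{v})-p_i^{\WT}(\vec{v})$ \emph{without} needing $\vec{p}(\vec{v})\ge\vec{p}^{\WT}$. Combined with $\delta_i\ge 0$, setting $q_i:=\max(p_i(\vec{v}),p_i^{\WT})$ gives $\sum_i\delta_i^{\vec{p}}(\vec{v})\ge\|\vec{q}\|_1-\|\vec{p}^{\WT}\|_1$. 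Upwards closure of $\cA$ places $\vec{q}$ in $\cA$; IR of $\vec{p}$ at $\vec{v}$ gives $\vec{q}\le(v_i(S_i^*))_{i\in W}$; and $\vec{q}\ge\vec{p}^{\WT}$ by construction. So $\vec{q}$ is feasible for the optimization defining $\mathsf{MR}_{\cA}(\vec{p}^{\WT})$, whence $\|\vec{q}\|_1\ge\|\vec{p}^*\|_1$. Chaining yields $\sum_i\delta_i^{\vec{p}}(\vec{v})\ge\|\vec{p}^*\|_1-\|\vec{p}^{\WT}\|_1=\sum_i\delta_i^{\vec{p}^*}(\vec{v})$, as needed.

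The main obstacle is that this argument silently relies on an IR assumption on $\vec{p}$---both at $\vec{v}$ (to lift $\vec{p}(\vec{v})$ to $\vec{q}$ while staying IR) and at $(v_i',\vec{v}_{-i})$ (for the WT-misreport lower bound on $\mu_i$). IR is automatic when $\cA$ is the core polytope or our WT-core since it is baked into the defining constraints, but for a fully abstract upwards-closed $\cA$ the theorem statement should either fold an IR cap $\vec{p}\le(v_i(S_i^*))_{i\in W}$ into the $\argmin$ or explicitly restrict the quantification to IR competitors.
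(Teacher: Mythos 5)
Your proof is correct and follows essentially the same route as the paper's: both arguments lift an arbitrary $\vec{p}\in\cA$ to $\max(\vec{p},\vec{p}^{\WT})$ using upwards closure, and both invoke Theorem~\ref{theorem:best_response} to identify $\delta_i^{\vec{p}}(\vec{v})$ with $p_i-p_i^{\WT}$ for prices above WT, so that minimizing total deviation reduces to minimizing revenue over $\{\vec{p}\in\cA:\vec{p}\ge\vec{p}^{\WT}\}$. Your closing caveat about the implicit individual-rationality assumption on the competitors $\vec{p}\in\cA$ is fair, but the paper's proof relies on the same implicit assumption (Theorem~\ref{theorem:best_response} requires IR), so this is a shared presentational gap rather than a divergence in approach.
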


\begin{proof}

    Consider the map on pricing rules $\vec{p}\mapsto\vec{p}'$ defined by $$p'_i(\vec{v}) = 
    \begin{cases}
        p_i(\vec{v}) & p_i(\vec{v})\ge p_i^{\WT}(\vec{v}) \\
        p_i^{\WT}(\vec{v}) & p_i(\vec{v}) < p_i^{\WT}(\vec{v}).
    \end{cases}$$

    This map satisfies the property that $\delta_i^{\vec{p}'}(\vec{v})\le\delta_i^{\vec{p}}(\vec{v})$ since if $p_i(\vec{v})\ge p_i^{\WT}(\vec{v})$, $p_i$ is unchanged, and otherwise the WT price is used for which $\delta_i^{\vec{p}^{\WT}}(\vec{v}) = 0$ due to incentive compatibility. So, for any price vector $\vec{p}\in\cA$ such that $\vec{p}\ngeq\vec{p}^{\WT}$, the described map produces $\vec{p}'$ such that $\vec{p}'\in\cA$ (since $\cA$ is upwards closed), $\vec{p}'\ge\vec{p}^{\WT}$, and $\vec{p}'$ has deviation profile no worse than $\vec{p}$. It therefore suffices to consider the subset of $\cA$ that lies above $\vec{p}^{\WT}$ to find prices in $\cA$ that minimize the sum of bidders' incentives to deviate. For $\vec{p}\ge\vec{p}^{\WT}$ we have $\delta_i^{\vec{p}}(\vec{v}) = p_i(\vec{v}) - p_i^{\WT}(\vec{v})$ by Theorem~\ref{theorem:best_response}. So minimizing $\sum_i \delta_i^{\vec{p}}$ is equivalent to minimizing $\sum_i p_i$, which completes the proof.
\end{proof}

Let $\MRC(\widehat{\vec{p}}) = \mathsf{MR}_{\core}(\widehat{\vec{p}})$ denote the minimum-revenue core above $\widehat{\vec{p}}$. Applying Theorem~\ref{theorem:general_uc} yields:

\begin{corollary}\label{corollary:core}
    $\MRC(\vec{p}^{\WT}(\vec{v}))\subseteq\argmin\left\{\sum_{i\in W}\delta_i^{\vec{p}}(\vec{v}):\vec{p}\in\core(\vec{v})\right\}$.
\end{corollary}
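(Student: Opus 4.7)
The plan is to derive the corollary by a direct invocation of Theorem~\ref{theorem:general_uc} with the choice $\cA = \core(\vec{v})$. Since $\MRC(\vec{p}^{\WT}(\vec{v}))$ is defined as $\mathsf{MR}_{\core}(\vec{p}^{\WT}(\vec{v}))$, the conclusion of the corollary is literally the conclusion of Theorem~\ref{theorem:general_uc} specialized to the core. The only hypothesis to check is that $\core(\vec{v})$ behaves as an ``upwards closed'' region in the sense needed by that theorem.

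First, I would verify upwards closedness. Inspecting the definition of $\core(\vec{v})$ in~\eqref{eq:core}, the coalitional constraints $\sum_{i\in W\setminus C} p_i\ge w(\vec{0},\vec{v}_C)-\sum_{j\in C}v_j(S_j^*)$ are lower bounds on nonnegative sums of coordinates of $\vec{p}$, and are therefore preserved under coordinate-wise increases of $\vec{p}$. The only constraints that could be violated by raising prices are the IR constraints $p_i\le v_i(S_i^*)$. However, Theorem~\ref{theorem:general_uc} already restricts attention to the IR box $\vec{p}\le (v_i(S_i^*))_{i\in W}$ in the definition of $\mathsf{MR}_{\cA}$, and the specific upward adjustment used in its proof (replacing $p_i$ by $p_i^{\WT}$ whenever $p_i<p_i^{\WT}$) preserves IR because $\vec{p}^{\WT}$ itself is individually rational. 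Thus, within the relevant IR box, $\core(\vec{v})$ is upwards closed, which is precisely what the proof of Theorem~\ref{theorem:general_uc} uses.

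Second, I would apply Theorem~\ref{theorem:general_uc} with $\cA=\core(\vec{v})$ to obtain
$$\mathsf{MR}_{\core(\vec{v})}(\vec{p}^{\WT}(\vec{v}))\;\subseteq\;\argmin\left\{\sum_{i\in W}\delta_i^{\vec{p}}(\vec{v}) : \vec{p}\in\core(\vec{v})\right\}.$$
Unfolding the definition $\MRC(\vec{p}^{\WT}(\vec{v})) = \mathsf{MR}_{\core(\vec{v})}(\vec{p}^{\WT}(\vec{v}))$ immediately yields the corollary.

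Since the corollary is essentially a one-line specialization of Theorem~\ref{theorem:general_uc}, there is no serious obstacle. The only subtle point is the interaction between the IR constraint (which bounds prices from above) and the upward-closedness assumption; the main work is to observe that this is automatically handled because the $\mathsf{MR}_{\cA}$ definition already confines prices to the IR box and $\vec{p}^{\WT}$ lies inside it.
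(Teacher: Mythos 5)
Your proposal is correct and matches the paper exactly: the paper proves Corollary~\ref{corollary:core} simply by stating ``Applying Theorem~\ref{theorem:general_uc} yields'' the result with $\cA=\core(\vec{v})$. Your extra observation---that the IR constraints $p_i\le v_i(S_i^*)$ mean $\core(\vec{v})$ is not literally upwards closed, but that the specific map $p_i\mapsto\max(p_i,p_i^{\WT})$ used in the theorem's proof stays inside the IR box because $\vec{p}^{\WT}$ is itself IR---is a valid and slightly more careful treatment of a point the paper leaves implicit.
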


Any payment rule $\vec{p}\in\MRC(\vec{p}^{\WT}(\vec{v}))$ is therefore incentive optimal in a Pareto sense as well: there is no other core-selecting $\vec{p}'$ such that $\delta_i^{\vec{p}'}(\vec{v})\le\delta_i^{\vec{p}}(\vec{v})$ for all $i$ and $\delta_{i^*}^{\vec{p}'}(\vec{v})\le\delta_{i^*}^{\vec{p}}(\vec{v})$ for some $i^*$. Corollary~\ref{corollary:core} generalizes the results of~\citet{Day07:Fair,day2010optimal} since when $\Theta_i(\vec{v}_{-i})$ is unrestricted for each agent $i$, $\MRC(\vec{p}^{\WT}) = \MRC(\vec{p}^{\VCG})$ which is the (unrestricted) minimum-revenue core they consider.

Corollary~\ref{corollary:core} gives strong theoretical justification for payment rules that lie on $\MRC(\vec{p}^{\WT})$. We expand on specific rules in Section~\ref{sec:experiments}, but as one concrete example one of the rules we coin---{\em WT nearest}---selects the price vector in $\MRC(\vec{p}^{\WT})$ that minimizes Euclidean distance to $\vec{p}^{\WT}$. WT nearest is the most direct generalization of the VCG nearest rule proposed by~\citet{day2012quadratic} that has been successfully used in spectrum auctions. In order to implement rules like WT nearest, we need algorithms for computing $\vec{p}^{\WT}$. That is the topic of the next section (Section~\ref{sec:formulations}). We conclude this section with an example illustrating some of the key concepts introduced so far.

\begin{example}\label{example:core}
Consider the CA with three items $\{a, b, c\}$ and $10$ single-minded bidders who submit the following bids: $v_1(a) = 20$, $v_2(b) = 20$, $v_3(c) = 20$, $v_4(ab)=28$, $v_5(ac) = 26$, $v_6(bc)=23$, $v_7(a)=10$, $v_8(b)=10$, $v_9(c)=10$, $v_{10}(abc)=41$ (this a slight modification of an example from~\citet{day2012quadratic}). Bidders 1, 2, and 3 win in the efficient allocation and their VCG prices are $\vec{p}^{\VCG}=(10,10,10)$. Say $$\Theta_1 = \R_{\ge 0},\Theta_2 = \{v_2(b)\ge 17\},\Theta_3=\{v_3(c)\ge 15\},$$ so $\vec{p}^{\WT}=(10,17,15)$. The core constraints are given by $$\left\{p_1, p_2, p_3\ge 10, p_1+p_2\ge 28, p_1+p_3\ge 26, p_2+p_3\ge 23, p_1+p_2+p_3\ge 41\right\}.$$ The vanilla VCG-nearest point of~\citet{day2012quadratic} on $\MRC(\vec{p}^{\VCG})$ is $(14, 14, 13)$ and the WT-nearest point on $\MRC(\vec{p}^{\WT})$ is $(11, 17, 15)$. Figure~\ref{fig:core} is an illustration of this example.
\end{example}

\begin{figure}[t]
    \centering
    \includegraphics[width=0.6\linewidth, trim={5cm 0cm 7.5cm, 0cm},clip]{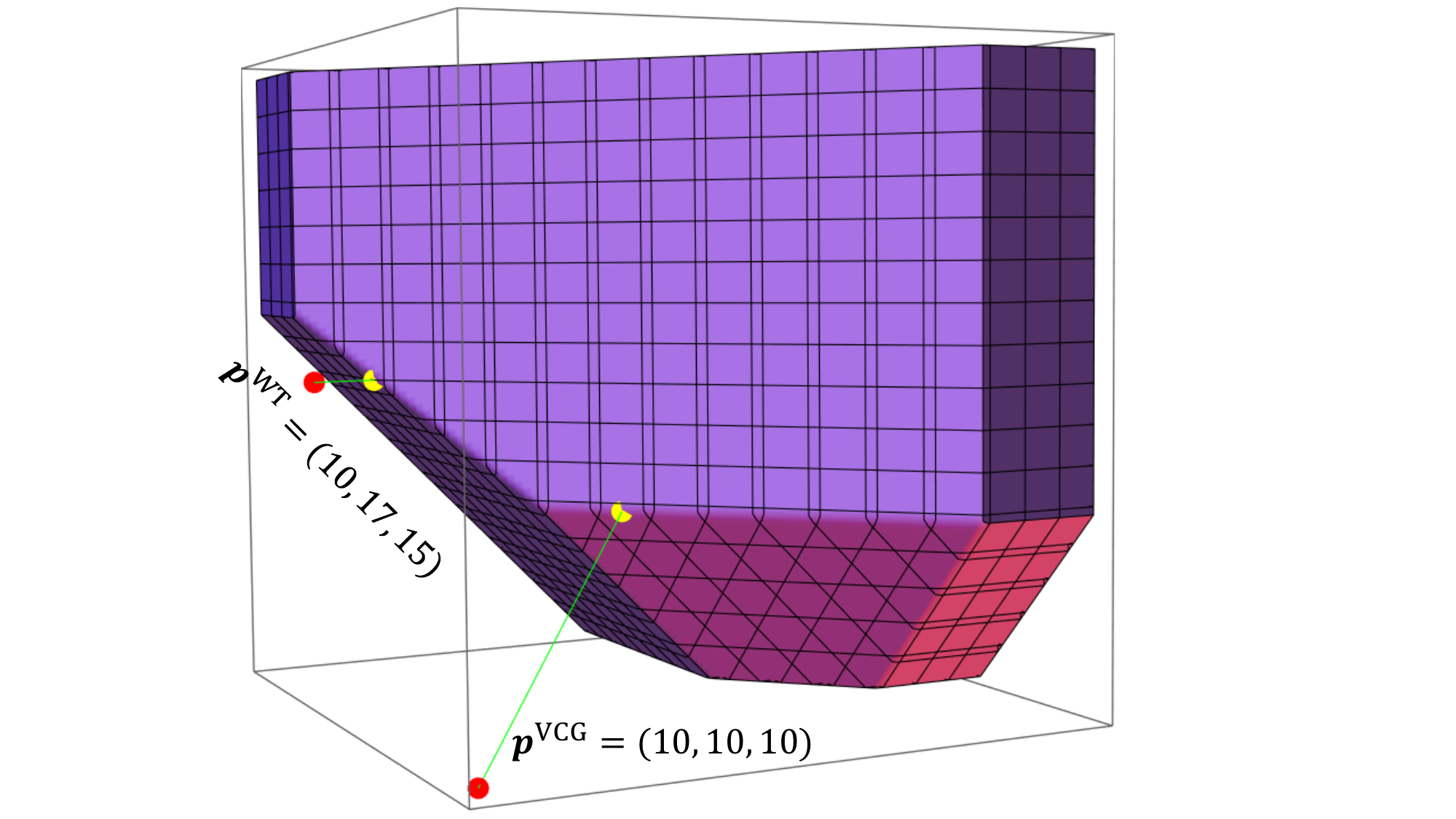}
    \caption{Price vectors $\vec{p}^{\VCG}$ and $\vec{p}^{\WT}$ (in red) and their nearest respective minimum-revenue core points (in yellow, connected by a green line) as derived in Example~\ref{example:core}. $\MRC(\vec{p}^{\WT})$ lies on a different face of the core than $\MRC(\vec{p}^{\VCG})$ and is of higher revenue.}
    \label{fig:core}
\end{figure}

\section{Computing Weakest-Type Prices}\label{sec:formulations}
In this section we develop techniques to compute $\vec{p}^{\WT}$, which are needed as a subroutine for computing the payments of our new core-selecting CAs.~\citet{balcan2023bicriteria} provide an initial theoretical investigation of WT computation, and one of our approaches builds upon their formulation, but we are the first to develop practical techniques and evaluate them via experiments.

Recall $B_i\subseteq 2^M$ is the set of bundles bidder $i$ bids on, so, for each $S\in B_i$, bidder $j$ submits her value $v_i(S)$ which is the maximum amount she would be willing to pay to win bundle $S$. For $\vec{B} = (B_1,\ldots, B_n)$, $\Gamma(\vec{B})$ denotes the set of feasible selections of winning bids.


\subsection{Background on Winner Determination Formulations}
The standard integer programming formulation of winner determination involves variables $x_j(S)$ indicating whether bidder $j$ is allocated bundle $S$:
\begin{equation*}\label{eq:wd}
w(\vec{v}) = \max\left\{\sum_{j\in N}\sum_{S\in B_j}v_j(S) x_j(S) :
\begin{aligned}
    &\sum_{j\in N}\sum_{S\in B_j, S\ni i} x_j(S)\le 1\;\;\forall i\in M \\
    & \sum_{S\in B_j} x_j(S)\le 1 \;\;\forall j\in N \\
    & x_j(S)\in\{0,1\}\;\; \forall j\in N, S\in B_j
\end{aligned}
\right\}.
\end{equation*} The first set of constraints ensures that no item is over-allocated and the second set of constraints ensures that each bidder has at most one winning bid. This formulation is the de facto method for computing efficient allocations in practice, and is the formulation we use when solving winner determination problems in experiments. 

Winner determination can also be formulated as the following linear program (LP) due to~\citet{Bikhchandani02:Package} (see also~\citet{deVries2007ascending}). The linear program explicitly enumerates all possible feasible allocations of the items and has the property that its optimal solution is integral. It involves variables $x_j(S)$ indicating whether bundle $S$ is allocated to bidder $j$ and variables $\delta(\vec{S})$ indicating whether feasible allocation $\vec{S}\in\Gamma$ is chosen:
\begin{equation*}\label{eq:wd-lp}
\max\left\{\sum_{j\in N}\sum_{S\in B_j}v_j(S) x_j(S) :
\begin{aligned}
    &x_j(S)\le\sum_{\vec{S}\in\Gamma: S_j = S}\delta(\vec{S})\;\;\forall j\in N, S\in B_j\;\; \boxed{p_j(S)}\\
    & \sum_{S\in B_j} x_j(S)\le 1 \;\;\forall j\in N \;\; \boxed{\pi_j}\\
    & \sum_{\vec{S}\in\Gamma}\delta(\vec{S})\le 1\;\; \boxed{\pi_{s}} \\
    & x_j(S)\ge 0\;\; \forall j\in N, S\in B_j \\
    & \delta(\vec{S})\ge 0\;\;\forall\vec{S}\in\Gamma
\end{aligned}
\right\}.
\end{equation*}
The first set of constraints ensures that winning bids are consistent with the bundles in the efficient allocation, the second set of constraints ensures that each bidder has at most one winning bid, and the third constraint ensures that only one efficient allocation is chosen. The corresponding dual variables are boxed following the respective primal constraint. The dual LP is given by (with the corresponding primal variables boxed) \begin{equation*}\label{eq:wd-lp-dual}
\min\left\{\sum_{j\in N}\pi_j + \pi_s :
\begin{aligned}
    &\pi_j\ge v_j(S)-p_j(S)\;\;\forall j\in N, S\in B_j\;\; \boxed{x_j(S)}\\
    & \pi_s\ge\sum_{j\in N}p_j(S_j)\;\;\forall\vec{S}\in\Gamma\;\;\boxed{\delta(\vec{S})}
\end{aligned}
\right\}
\end{equation*} and has a constraint for every possible feasible allocation $\vec{S}\in\Gamma$. By strong duality, its optimal objective value is also $w(\vec{v})$. The dual variables $p_j(S)$ have the natural interpretation of non-additive non-anonymous bundle prices that support the efficient allocation computed by the primal, with $\pi_j$ representing bidder $j$'s utility and $\pi_s$ the seller's revenue~\citep{Bikhchandani02:Package, deVries2007ascending}, though in general these do not coincide with VCG prices (\citet{Bikhchandani01:Linear} provide an in depth exploration of the connections between LP duality and VCG prices).

\subsection{Formulations and Constraint Generation for WT Computation}

Let $\widetilde{B}_i$ denote the set of bundles $S_i$ such that $v_i(S_i)$ is constrained by $\Theta_i(\vec{v}_{-i})$ (so if $\Theta_i(\vec{v}_{-i})$ is explicitly represented as a list of constraints on $v_i$, $\widetilde{B}_i$ is the set of bundles $S_i$ such that $v_i(S_i)$ appears in one of those constraints). 

We consider two mathematical programming formulations of weakest-competitor VCG price computation, which is the min-max optimization problem $$\min_{\widetilde{v}_i\in\Theta_i(\vec{v}_{-i})}\max_{\vec{S}\in\Gamma(\widetilde{B}_i, \vec{B}_{-i})}\widetilde{v}_i(S_i)+\sum_{j\in N\setminus i}v_j(S_j).$$ The first is due to~\citet{balcan2023bicriteria} and the second is based on the dual of the winner determination LP~\eqref{eq:wd-lp-dual} of~\citet{Bikhchandani02:Package}. Both formulations enumerate the set of feasible allocations $\Gamma$ in their constraint set so they are too large to be written down explicitly. Instead, we develop constraint generation routines that dynamically add constraints as needed.

\subsubsection*{Formulation based on Balcan-Prasad-Sandholm LP}

The mathematical program for computing the pivot term in bidder $i$'s weakest-competitor VCG price $p_i^{\WT}$ due to~\citet{balcan2023bicriteria} is:

\begin{equation}\label{eq:wc-bps}\min\left\{\gamma :
\begin{aligned}
    &\widetilde{v}_i(S_i) + \sum_{j\neq i}v_j(S_j)\le\gamma\;\;\forall \vec{S}\in\Gamma(\widetilde{B}_i, \vec{B}_{-i}), \\
    &\widetilde{v}_i\in \Theta_i(\vec{v}_{-i})
\end{aligned}
\right\}.\tag{BPS}
\end{equation} It turns the min-max problem into a pure minimization problem by enumerating the inner maximization terms and adding an auxiliary scalar variable $\gamma$ to upper bound those terms. In constraint generation, we initialize the BPS program with some restricted set of constraints corresponding to feasible allocations $\Gamma_0\subseteq\Gamma(\widetilde{B}_i,\vec{B}_{-i})$ and solve to get a candidate solution $\widehat{\gamma}, \widehat{v}_i$. Next, we find the most violated constraint not currently in $\Gamma_0$ by computing $w(\widehat{v}_i,\vec{v}_{-i})$ and comparing to $\widehat{\gamma}$. If $\widehat{\gamma} - w(\widehat{v}_i, \vec{v}_{-i}) < 0$ we have found a (most) violated constraint, and we add the constraint corresponding to the violating allocation $(\widehat{S}_1,\ldots, \widehat{S}_n)$ that solves $w(\widehat{v}, \vec{v}_{-i})$ to the restricted pricing LP (that is, $\Gamma_0\leftarrow\Gamma_0\cup\{\widehat{\vec{S}}\}$). The BPS program with the additional constraint is resolved and the process iterates. Otherwise if $\widehat{\gamma} - w(\widehat{v}_i, \vec{v}_{-i}) \ge 0$, all constraints of the unrestricted BPS program are satisfied and so $\widehat{\gamma},\widehat{v}_i$ is an optimal solution to the BPS program and constraint generation terminates. 

\subsubsection*{Formulation based on Bikhchandani-Ostroy LP}

The mathematical program for computing the pivot term in bidder $i$'s weakest-competitor VCG price $p_i^{\WT}$ based on the dual LP of the~\citet{Bikhchandani02:Package} formulation~\eqref{eq:wd-lp-dual} is:

\begin{equation}\label{eq:wc-bo}
\min\left\{\sum_{j\in N}\pi_j + \pi_s :
\begin{aligned}
    &\pi_i\ge \widetilde{v}_i(S) - p_i(S)\;\;\forall S\in \widetilde{B}_i\\
    &\pi_j\ge v_j(S)-p_j(S)\;\;\forall j\in N\setminus i, S\in B_j\\
    & \pi_s\ge\sum_{j\in N}p_j(S_j)\;\;\forall\vec{S}\in\Gamma(\widetilde{B}_i, \vec{B}_{-i})\\
    &\widetilde{v}_i\in\Theta_i(\vec{v}_{-i})
\end{aligned}
\right\}.\tag{BO}
\end{equation} It has $|\widetilde{B}_i| +\sum_{j\neq i}|B_j| + n + 1$ variables while the BPS formulation has $|\widetilde{B}_i|+1$ variables. In constraint generation, we solve a restricted BO program over an initial set of feasible allocations $\Gamma_0$ (replacing the third set of constraints) and get a candidate solution $\widehat{v}_i$, $(\widehat{\pi}_j)$, $\widehat{\pi}_s$, $(\widehat{p}_j(S))$. To find the most violated constraint we solve winner determination where bidders' bids are given by the values of the supporting prices $\widehat{p}_j(S)$ and compare the value to the seller's revenue $\widehat{\pi}_s$. If $\widehat{\pi}_s - w(\widehat{\vec{p}}) < 0$ the constraint corresponding to the feasible allocation $\widehat{\vec{S}}$ that solves $w(\widehat{\vec{p}})$ is (most) violated. So, we add that constraint to the restricted BO program and iterate. Else if $\widehat{\pi}_s - w(\widehat{\vec{p}}) \ge 0$ all BO constraints are satisfied, and our candidate solution is optimal so constraint generation terminates.


\paragraph{Remark} An advantage of the BPS formulation is that it can generally be applied to any multidimensional mechanism design problem, and the constraint generation works as long as one can formulate and solve the separation problem (winner determination) in a tractable way. The BO formulation is specific to combinatorial auctions.

So far we have given an abstract presentation of the BPS and BO mathematical programs that make no reference to the structure of type spaces. The constraint generation routines we described work generally as long as one can solve the associated optimization problems, but the best approach to WT computation might differ based on the type space representation. For example, if $\Theta_i(\vec{v}_{-i})$ is a finite list of types, one would simply solve the winner determination problem for each type in the list and choose the smallest value. In our experiments type spaces are polyhedra described by linear constraints, so the BPS and BO formulations are linear programs for which constraint generation as we have described is a de facto approach. Extending these techniques to other forms of type spaces (convex sets, unions of polyhedra---in order to represent disjunctive statements, {\em etc.}) is a compelling direction for future work. 

\subsection{Comparison of the BPS and BO Formulations}

We implemented constraint generation on both the BPS and BO formulations where the type spaces $\Theta_i(\vec{v}_{-i})$ were generated independently at random for each bidder. Each type space $\Theta_i(\vec{v}_{-i})$ was determined by 8 randomly generated linear constraints (so both the BPS and BO formulations were linear programs) that were consistent with bidder $i$'s actual bids (we defer the specific details of how we generated CA instances and respective bidder type spaces to Section~\ref{sec:experiments}). For both formulations, we initialized the starting set of allocations $\Gamma_0$ with only the efficient allocation $\vec{S}^*$.

Run-times and total number of constraint generation iterations to compute $\vec{p}^{\WT}$ are reported in Table~\ref{table:bps_vs_bo}. 

{
\begin{table}[h]
\centering
\begin{tabular}{c|cccc|cccc|}
\cline{2-9}
 &
  \multicolumn{4}{c|}{\small BPS} &
  \multicolumn{4}{c|}{\small BO} \\ \hline
\multicolumn{1}{|c|}{\scriptsize Goods/Bids} &
  \multicolumn{1}{c|}{\begin{tabular}[c]{@{}c@{}}\small Run-time\\ \small (GM)\end{tabular}} &
  \multicolumn{1}{c|}{\begin{tabular}[c]{@{}c@{}}\small Run-time\\ \small (GSD)\end{tabular}} &
  \multicolumn{1}{c|}{\begin{tabular}[c]{@{}c@{}}\small CG iters\\ \small (GM)\end{tabular}} &
  \begin{tabular}[c]{@{}c@{}}\small CG iters\\ \small (GSD)\end{tabular} &
  \multicolumn{1}{c|}{\begin{tabular}[c]{@{}c@{}}\small Run-time\\ \small (GM)\end{tabular}} &
  \multicolumn{1}{c|}{\begin{tabular}[c]{@{}c@{}}\small Run-time\\ \small (GSD)\end{tabular}} &
  \multicolumn{1}{c|}{\begin{tabular}[c]{@{}c@{}}\small CG iters\\ \small (GM)\end{tabular}} &
  \begin{tabular}[c]{@{}c@{}}\small CG iters\\ \small (GSD)\end{tabular} \\ \hline
\multicolumn{1}{|c|}{64/250} &
  \multicolumn{1}{c|}{4.0} &
  \multicolumn{1}{c|}{2.3} &
  \multicolumn{1}{c|}{25.6} &
  2.4 &
  \multicolumn{1}{c|}{7.3} &
  \multicolumn{1}{c|}{2.3} &
  \multicolumn{1}{c|}{58.2} &
  2.5 \\ \hline
\multicolumn{1}{|c|}{64/500} &
  \multicolumn{1}{c|}{9.4} &
  \multicolumn{1}{c|}{2.9} &
  \multicolumn{1}{c|}{45.8} &
  3.1 &
  \multicolumn{1}{c|}{20.8} &
  \multicolumn{1}{c|}{2.8} &
  \multicolumn{1}{c|}{102.5} &
  2.6 \\ \hline
\multicolumn{1}{|c|}{128/250} &
  \multicolumn{1}{c|}{8.5} &
  \multicolumn{1}{c|}{3.2} &
  \multicolumn{1}{c|}{42.2} &
  2.5 &
  \multicolumn{1}{c|}{17.6} &
  \multicolumn{1}{c|}{2.9} &
  \multicolumn{1}{c|}{110.3} &
  2.4 \\ \hline
\multicolumn{1}{|c|}{128/500} &
  \multicolumn{1}{c|}{46.9} &
  \multicolumn{1}{c|}{6.3} &
  \multicolumn{1}{c|}{59.8} &
  2.9 &
  \multicolumn{1}{c|}{110.6} &
  \multicolumn{1}{c|}{5.8} &
  \multicolumn{1}{c|}{164.2} &
  2.4 \\ \hline
\end{tabular}
\caption{Geometric mean (GM) and standard deviation (GSD) of run-times (in seconds) and number of constraint generation (CG) iterations for the BPS and BO formulations, varying the number of goods and bids, averaged across 100 instances for each good/bid setting.}
\label{table:bps_vs_bo}
\end{table}}

Constraint generation on the BPS formulation was significantly faster and required far fewer iterations than the BO formulation. On an instance-by-instance basis, the BPS formulation was faster and cheaper than the BO formulation on 100\% of CA instances. In all experiments reported in the following section (Section~\ref{sec:experiments}), we therefore only ran constraint generation on the BPS formulation for all WT computations.

\section{Experiments}\label{sec:experiments}

We ran experiments to evaluate the revenue, incentive, fairness, and computational properties of our new core-selecting CAs. We describe the main components of the experimental setup below.

\subsection*{New and old core-selecting CAs}

For a given CA instance we compare five different core-selecting payment rules, defined in the below bulleted list (the three new CAs we introduce in this work are bolded):
\begin{itemize}
    \item Vanilla VCG nearest~\citep{day2012quadratic}: the point $\vec{p}\in\MRC(\vec{p}^{\VCG})$ that minimizes $\norm{\vec{p}-\vec{p}^{\VCG}}_2^2$.
    \item Vanilla zero nearest~\citep{erdil2010new}: the point $\vec{p}\in\MRC(\vec{p}^{\VCG})$ that minimizes $\norm{\vec{p}}_2^2$.
    \item {\bf WT nearest}: the point $\vec{p}\in\MRC(\vec{p}^{\WT})$ that minimizes $\norm{\vec{p}-\vec{p}^{\WT}}_2^2$.
    \item {\bf Zero nearest}: the point $\vec{p}\in\MRC(\vec{p}^{\WT})$ that minimizes $\norm{\vec{p}}_2^2$.
    \item {\bf VCG nearest}: the point $\vec{p}\in\MRC(\vec{p}^{\WT})$ that minimizes $\norm{\vec{p}-\vec{p}^{\VCG}}_2^2$.
\end{itemize}

The WT-nearest rule is the most direct generalization of the vanilla VCG-nearest rule proposed by~\citet{day2012quadratic} and the zero-nearest rule is the most direct generalization of the vanilla zero-nearest rule proposed by~\citet{erdil2010new}.

\subsubsection*{Quadratic programming and core-constraint generation} Each of the five price vectors is computed via the quadratic programming and core-constraint generation technique developed by~\citet{day2012quadratic}, which we describe here at a high level. Details can be found in~\citet{day2012quadratic}. Given an input {\em reference point} $\vec{p}^{\texttt{ref}}$, the goal is to find the point $\vec{p}\in\MRC(\vec{p}^{\WT})$ that minimizes $\norm{\vec{p}-\vec{p}^{\texttt{ref}}}_2^2$. Let $\text{QP}(r)$ denote the quadratic program $$\min\left\{\norm{\vec{p}-\vec{p}^{\texttt{ref}}}_2^2: \vec{p}\in\core, \vec{p}\ge\vec{p}^{\WT}, \norm{\vec{p}}_1=r\right\}$$ and let LP denote the linear program $$\min\left\{\norm{\vec{p}}_1 : \vec{p}\in\core, \vec{p}\ge\vec{p}^{\WT}\right\}.$$ Core constraints make both formulations too large to represent explicitly, and hence they are solved with constraint generation. First, solve restricted LP with some initial set (possibly empty) of core constraints; let $\widehat{r}$ be the optimal objective. Then, solve restricted $\text{QP}(\widehat{r})$ with the same initial core constraints, and let $\widehat{\vec{p}}$ be the optimal solution. To find the most violated core constraint, solve an auxiliary winner determination where all bids by winner $i$ are reduced by their opportunity cost $v_i(S_i^*) - \widehat{p}_i$. If the optimal winner determination value/efficient welfare is more than the current QP revenue $\norm{\widehat{\vec{p}}}_1$, the core constraint corresponding to the set of winners in the auxiliary winner determination is violated. Add that constraint to the restricted LP and QP, resolve LP to get an updated $\widehat{r}$, solve $\text{QP}(\widehat{r})$, and iterate. (The revenue-minimization LP is needed to ensure that we find the closest point to $\vec{p}^{\texttt{ref}}$ on $\MRC(\vec{p}^{\WT})$. Without that we might find a closer point, but it will be outside the minimum-revenue core and therefore not minimize the sum of incentives to deviate.)

\subsection*{Type space generation}

For each CA instance, we generated synthetic bidder type spaces $\Theta_i(\vec{v}_{-i})$ determined by linear constraints (so the formulations for WT price computation from Section~\ref{sec:formulations} are LPs). We generated $\Theta_i(\vec{v}_{-i})$ independently for each bidder by generating $K$ random linear constraints according to parameter $\beta$ as follows. Each constraint is of the form $$\sum_{S_i\in B_i}X(S_i)c(S_i)\widetilde{v}_i(S_i) \ge \alpha\cdot\sum_{S_i\in B_i}X(S_i)c(S_i)v_i(S_i)$$ where $\widetilde{v}_i(S_i)$ are the variables representing bidder $i$'s bids, each $X(S_i)$ is an independent Bernoulli $0/1$ random variable with success probability $\beta$, each $c(S_i)$ is drawn uniformly and independently from a decay distribution where $c(S_i)$ is initially equal to $1$ and is repeatedly incremented with success probability $0.2$ until failure, and $\alpha$ is drawn uniformly at random from $[1/2, 1]$. So, each such constraint is guaranteed to be satisfied by the actual bids, and $\alpha$ determines how close to tight the constraint is. Each of the $K$ constraints per-bidder is generated this way independently.

\subsection{Results}

We used the Combinatorial Auction Test Suite (CATS)~\citep{Leyton-Brown00:Towards} version 2.1 to generate CA instances. Like~\citet{Day07:Fair} and~\citet{day2012quadratic}, we generated each instance from a randomly chosen distribution from the seven available distributions meant to model real-world CA applications. Code for our experiments was written in C{\tt ++} and we used Gurobi version 12.0.1, limited to 8 threads, to solve all linear programs, integer programs, and quadratic programs. All computations were done on a 64-core machine with 16GB of RAM allocated for each CA instance.

\subsubsection*{Run-time cost of WT computation}

Table~\ref{table:beta_runtime} records the effects of varying $\beta\in\{0.2, 0.5, 0.8\}$ (which controls the sparsity of type space constraints) on the run-time and number of CG iterations to compute $\vec{p}^{\WT}$. We fixed the number of constraints $K = 8$, and for each $\beta$ and each setting of goods in $\{64,128\}$ and bids in $\{250,500\}$ generated 100 instances (for a total of 400 instances). For these instances, the (geometric) mean run-time and worst-case run-time for $\vec{p}^{\VCG}$ were 2.7 seconds and 608.5 seconds, respectively.

\begin{table}[h]
\centering
\begin{tabular}{|c|c|c|c|c|}
\hline
$\beta$ &
  \begin{tabular}[c]{@{}c@{}}Run-time\\ (GM)\end{tabular} &
  \begin{tabular}[c]{@{}c@{}}CG iters\\ (GM)\end{tabular} &
  \begin{tabular}[c]{@{}c@{}}Run-time\\ (Max)\end{tabular} &
  \begin{tabular}[c]{@{}c@{}}CG iters\\ (Max)\end{tabular} \\ \hline
0.2 & 9.9  & 32.0 & 1515.1 & 424 \\ \hline
0.5 & 13.2 & 56.4 & 1610.0 & 536 \\ \hline
0.8 & 15.3 & 75.3 & 1896.0 & 567 \\ \hline
\end{tabular}
\caption{Run-times and constraint generation iterations for the BPS formulation as $\beta$ varies, with number of goods varying in $\{64,128\}$ and number of bids varying in $\{250,500\}$, averaged over 100 instances for each $\beta$ and each setting of goods/bids.}
\label{table:beta_runtime}
\end{table}

The worst run-time for WT computation was thus roughly $3.1\times$ the worst run-time for VCG computation. In general, increasing $\beta$, which increases the density of the type space constraints, increases the cost of WT computation. The additional run-time cost for finding a $\MRC(\vec{p}^{\WT})$ via core-constraint generation was in fact less expensive than the run-time of core-constraint generation to find vanilla $\MRC$ points. The geometric mean runtime of the vanilla VCG nearest rule of~\citet{day2012quadratic} on the above instances was 1.7 seconds, with a worst case run-time of 523.9 seconds. The geometric mean of our WT nearest rule on the same instances was 1.0 seconds, with a worst case run-time of 475.0 seconds. So, the main run-time cost of our new core-selecting CAs is in computing $\vec{p}^{\WT}$.

Varying the number of type space constraints $K$ did not have a significant impact on run-time nor number of constraint generation iterations for WT computation. Over all CA instances with number of goods in $\{64, 128\}$ number of bids in $\{250, 500, 1000\}$, and $\beta = 0.3$, the geometric mean of run-times over all $K$ was 19.7 seconds and the geometric mean of constraint generation iterations was 42.7. The worst-case VCG run-time was 19545.2	seconds and the worst-case WT run-time was 47718.0	seconds ($2.44\times$ larger than VCG run-time). The significantly larger run-time relative to the experiment varying $\beta$ was due to the inclusion of the the CATS instances with 1000 bids.

\subsubsection*{Incentive and revenue effects}

We now discuss the impact of type space information on the sum of bidders' incentives to deviate from truthful bidding in a $\MRC(\vec{p}^{\WT})$-selecting CA. That is, we record the quantity $\sum_{i\in 
W}\delta_i^{\vec{p}}$ where $\vec{p}$ is any one of our new core-selecting CAs. By Theorem~\ref{theorem:best_response} this is equal to $\sum_{i\in W} p_i - p_i^{\WT}$, that is, the difference in revenue between the $\MRC(\vec{p}^{\WT})$-selecting rule and WT. We track this quantity as the number of type space constraints $K$ per bidder varies in $\{1,2,4,8,16\}$, and compare it to the sum of bidders' incentives to deviate in the vanilla unrestricted setting, which by~\citet{day2010optimal} is equal to the difference in revenue between a $\MRC(\vec{p}^{\VCG})$-selecting rule and VCG. Each revenue difference recorded on the $y$-axis of Figure~\ref{fig:incentives_plot} is averaged over 100 CA instances each for goods in $\{64,128\}$ and bids in $\{240,500,1000\}$, for a total of $600$ CA instances and a total of $600\times 5 = 3000$ type space instances/WT computations. We fixed the constraint sparsity parameter $\beta = 0.3$. Figure~\ref{fig:incentives_plot} shows a clear trend that more information about the bidders (in the form of more type space constraints) yields better core incentives---and vastly better incentives than a vanilla MRC-selecting rule.

\begin{figure}[t]
    \centering
    \includegraphics[width=\linewidth]{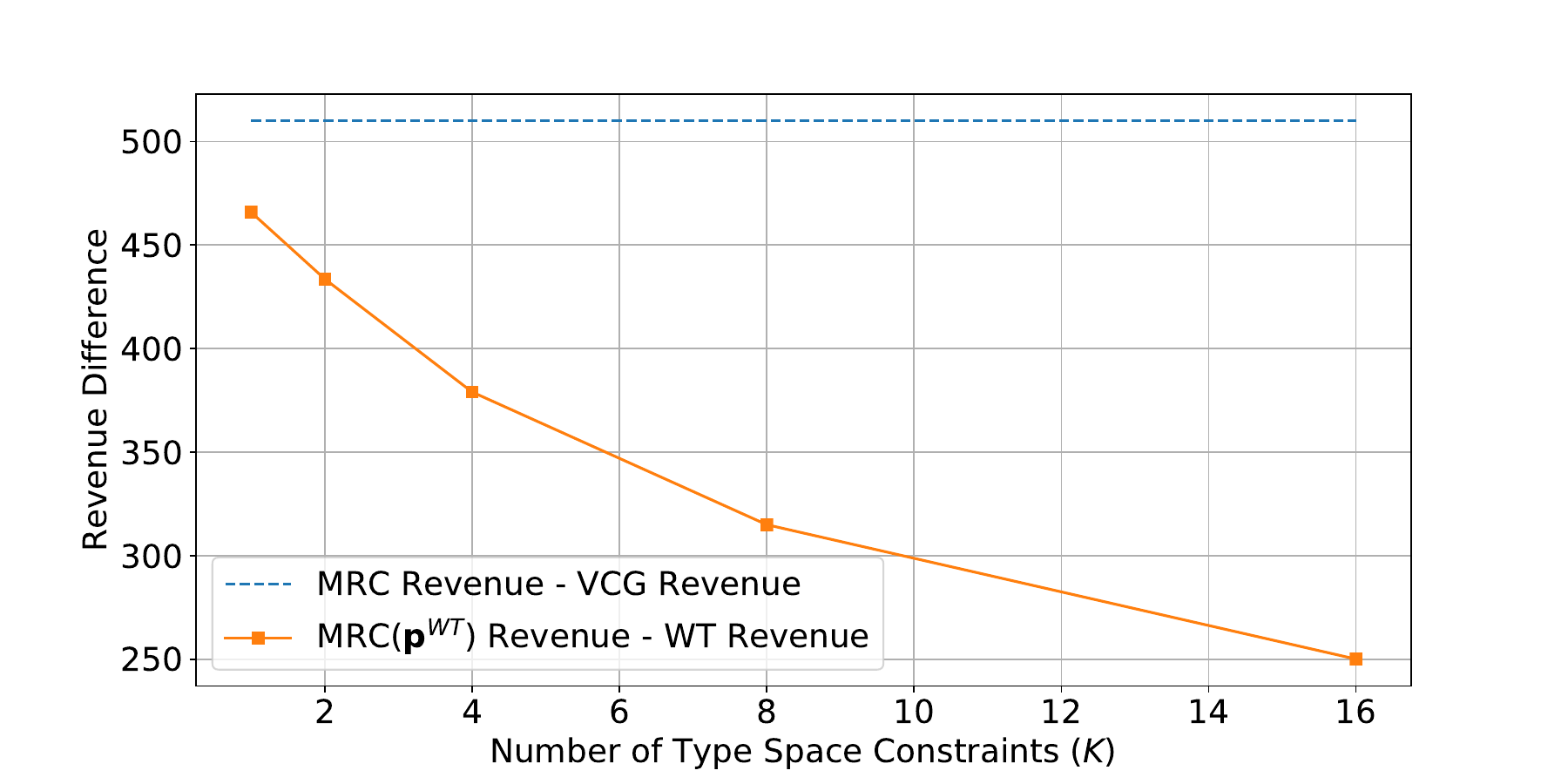}
    \caption{Incentive effects as type spaces convey more information (by varying the number of constraints $K\in\{1,2,4,8,16\}$, with number of goods varying in $\{64,128\}$ and number of bids varying in $\{250,500,1000\}$, averaged over 100 instances for each $K$ and each setting of goods/bids.}
    \label{fig:incentives_plot}
\end{figure}

On the revenue front, Figure~\ref{fig:revenue_plot} shows the impact of more informative type spaces on the revenues generated by our new core-selecting CAs (the experimental setup is the same as in the previous paragraph). The $\MRC(\vec{p}^{\WT})$-selecting rules are the clear winner, nearly closing half the gap between MRC revenue and the efficient social welfare when type spaces are determined by $K=16$ constraints. While the $\MRC(\vec{p}^{\WT})$ revenue is not significantly larger than the MRC revenue for $K\le 8$, WT's revenue {\em is} much larger than VCG's, leading to much better incentives for the $\MRC(\vec{p}^{\WT})$ rule than the MRC rule in that regime despite similar revenues. So, a $\MRC(\vec{p}^{\WT})$-selecting rule with revenue not much larger than a vanilla $\MRC(\vec{p}^{\VCG})$-selecting rule can still provide significantly better incentives for bidders if $\norm{\vec{p}^{\WT}}_1$ is much larger than $\norm{\vec{p}^{\VCG}}_1$. 

\begin{figure}[t]
    \centering
    \includegraphics[width=\linewidth]{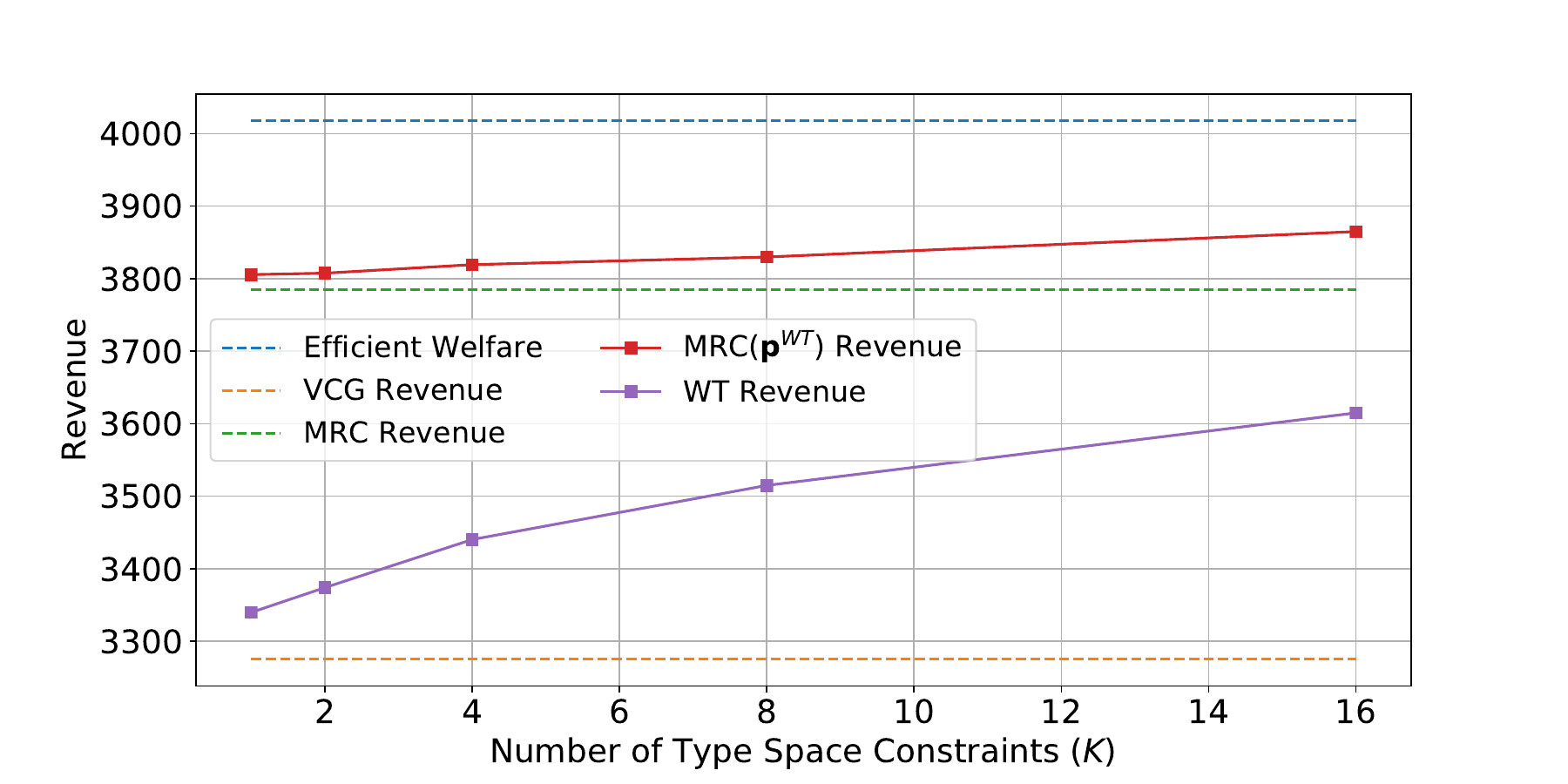}
    \caption{Revenue effects as type spaces convey more information (by varying the number of constraints $K\in\{1,2,4,8,16\}$, with number of goods varying in $\{64,128\}$ and number of bids varying in $\{250,500,1000\}$, averaged over 100 instances for each $K$ and each setting of goods/bids.}
    \label{fig:revenue_plot}
\end{figure}

\subsubsection*{How often is WT in the core?} We now report on the frequency with which $\vec{p}^{\WT}\in\core$. For CA instances with this property, all $\MRC(\vec{p}^{\WT})$-selecting rules return $\vec{p}^{\WT}$ unmodified. The following table records the frequency as the number of type space constraints $K$ varies (the setup is the same as those in Figures~\ref{fig:incentives_plot} and~\ref{fig:revenue_plot}).

\begin{table}[h]
\centering
\begin{tabular}{|l|l|l|l|l|l|}
\hline
\small Number of type space constraints &
  1 &
  2 &
  4 &
  8 &
  16 \\ \hline
\begin{tabular}[c]{@{}l@{}}\small Fraction of CAs where $\vec{p}^{\VCG}\notin\core$ but $\vec{p}^{\WT}\in\core$\end{tabular} &
  2.00\% &
  3.02\% &
  3.36\% &
  3.69\% &
  4.18\% \\ \hline
\end{tabular}
\label{table:core_fraction}
\caption{Frequency with which WT is in the core but VCG is not, with number of goods varying in $\{64,128\}$ and number of bids varying in $\{250,500,1000\}$; 100 instances for each $K$ and each setting of goods/bids.}
\end{table}

As the number of type space constraints increases, the likelihood that WT is in the core increases as well (which is in accordance with the trend in Figure~\ref{fig:revenue_plot} that more information through more type space constraints implies greater WT revenue). Additionally, 1.17\% of all instances had the property that both VCG and WT were in the core, both generating nonzero revenue (this has no dependence on the type space since if VCG is in the core and generates nonzero revenue, so does WT). A fascinating phenomenon we observed was that 7.5\% of instances had the property that all vanilla MRC-selecting rules (like vanilla VCG-nearest of~\citet{day2012quadratic} and vanilla zero-nearest of~\citet{erdil2010new}) generated zero revenue. In other words, VCG generates zero revenue {\em yet is in the core}. This is an {\em even worse} situation than the zero revenue cases described by~\citet{ausubel2017market} and~\citet{ausubel2023vcg} that a vanilla core-selecting rule is unable to fix. The WT auction is therefore indispensable to generate any revenue in these cases. \footnote{In the context of Theorem~\ref{theorem:impossibility}, such situations arise when the core polytope is the box with diagonally opposite points given by the origin (which is equal to VCG) and the winning bid vector. The WT point is strictly in the interior of this box, and the infinitely many points on line segment connecting the origin to the WT point are attainable with an IC auction.} To our knowledge, no prior work discusses this phenomenon.

\subsubsection*{Who shoulders the core burden?}

In~\citet{day2012quadratic}, the impact of core pricing on the highest and lowest bidder is visualized. They show that on CATS instances with few bids (100 or less), their vanilla VCG nearest rule provides a more equitable apportionment of the core burden than the vanilla zero nearest rule of~\citet{erdil2010new}. That trend is less pronounced for the numbers of bids that we consider (250, 500, and 1000), and hence we present a slightly different visualization of the splitting of the core burden. 

For each CA instance, and each core-pricing rule $\vec{p}$, the core burden relative to WT (resp. VCG) of bidder $i$ is the quantity $\frac{p_i - p_i^{\WT}}{\sum_{i\in W}p_i-p_i^{\WT}}$ (resp. $\frac{p_i - p_i^{\VCG}}{\sum_{i\in W}p_i-p_i^{\VCG}}$). We sorted the bidders in ascending order of winning bid $v_i(S_i^*)$, and summed up the total core burdens for the lower and higher halves of bidders. Figure~\ref{fig:core-burden} displays the splitting of core burdens between the lower and higher halves, averaged across all instances with $K=8$. For VCG nearest, WT nearest, and zero nearest, the left bar displays core burdens relative to WT, with the solid black bottom representing the lower half of bidders and the gray top representing the upper half. The right bar displays core burdens relative to VCG, with the darker gray bottom representing the lower half of bidders and the lighter gray top representing the upper half. Only the core burden relative to VCG is displayed for the vanilla VCG nearest~\citep{day2012quadratic} and vanilla zero nearest~\citep{erdil2010new} since it would not make sense to compute core burdens relative to WT for these rules. Overall, there was not a significant difference between WT-nearest, zero-nearest, and VCG-nearest (and this was also the case in~\citet{day2012quadratic} in their comparison of vanilla VCG-nearest and vanilla zero-nearest on CATS instances with more than 250 bids). VCG-nearest placed the least core burden and zero-nearest placed the greatest core burden on the lower half of bidders, and all three rules are similar to the vanilla MRC rules in terms of core burdens relative to VCG. This fact provides further validation for our $\MRC(\vec{p}^{\WT})$-selecting rules as they do not unfairly skew the apportionment of the core burden.

\begin{figure}
    \centering
    \includegraphics[width=\linewidth]{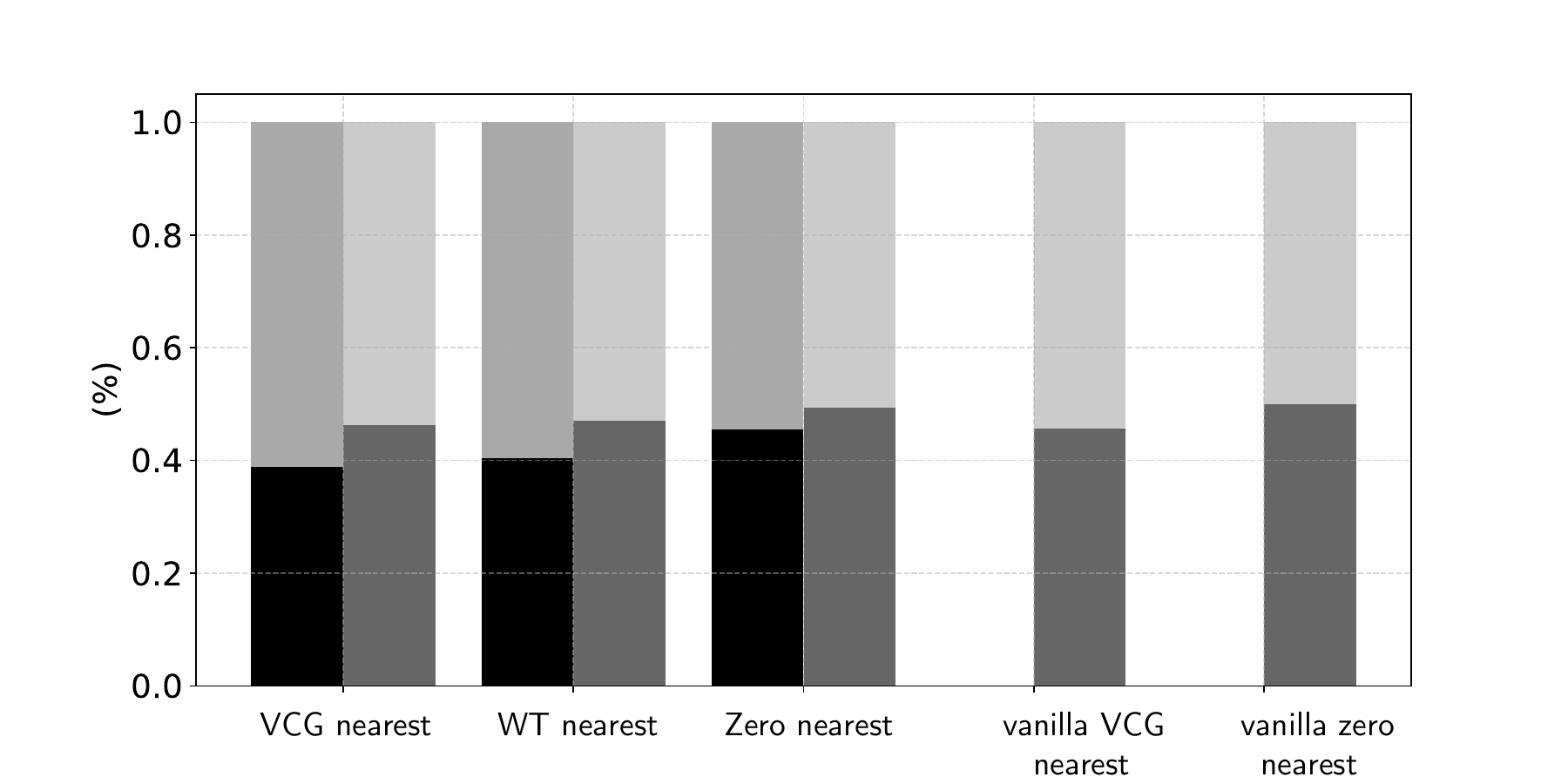}
    \caption{Core burdens shouldered by the lower and upper halves of bidders (measured by winning bid value). For the three $\MRC(\vec{p}^{\WT})$-selecting rules, the left bar displays the core burden split relative to WT, and the right bar displays the core burden split relative to VCG. For the two vanilla MRC-selecting rules, the bar displays the core burden split relative to VCG.}
    \label{fig:core-burden}
\end{figure}

\subsection{Discussion of Alternate Rules}\label{sec:alternate_rules}

We conclude this section with a brief discussion of alternate core-selecting CAs that do not conform to the exact template that has been prescribed here and by prior work.

The previous discussion of equitable sharing of the core burden begets the question of whether there exist core-selecting rules that explicitly enforce how the core burden should be split. For example, is there a $\MRC(\vec{p}^{\WT})$-selecting CA $\vec{p}$ that enforces that each bidder pays a core burden in exact proportion to their winning bid, that is, $$\frac{p_i - p_i^{\WT}}{\sum_{i\in W}p_i - p_i^{\WT}}\ge\alpha\cdot\frac{v_i(S_i^*)}{\sum_{i\in W}v_i(S_i^*)}$$ for some $\alpha$? The answer is no due to the asymmetric information that can be conveyed about bidders by type spaces. For example, if $\Theta_1(\vec{v}_{-1}) = \{v_1\}$, so $p_i^{\WT} = v_i(S_i^*)$, IR constraints force $p_i = p_i^{\WT}$ for any core-selecting $\vec{p}$. So in such situations it could very well be the case that a low bidder is forced to shoulder a large majority of the core burden. A general rule of thumb here appears to be that the bidders with type spaces that convey the least information about them must pay most of the core burden. A formal investigation of this idea is an interesting direction for future research.

Schemes that do not select points on $\MRC(\vec{p}^{\WT})$ are also possible. For example, one could minimize the maximum payment over the subset of the core that lies above $\vec{p}^{\WT}$. This would minimize the worst incentive of any bidder to deviate from truthful bidding (by the same argument as in Theorem~\ref{theorem:general_uc}) rather than the sum of bidders' incentives to deviate, and would still yield a core point that is incentive optimal in a Pareto sense.~\citet{Parkes01:Achieving} proposed such rules in the context of budget balance in exchanges, but those rules have received limited empirical evaluation in the CA setting to date.

Finally, rules that minimize a weighted sum of squares (as studied by~\citet{bunz2022designing}) might be of particular relevance so that bidders $i$ such that $p_i^{\WT}$ is much larger than $p_i^{\VCG}$ are made to pay less of the price difference in moving from $p_i^{\WT}$ to $\MRC(\vec{p}^{\WT})$. 

\section{Conclusions and Future Research}

We presented a new family of core-selecting CAs that take advantage of bidder information known to the auction designer through bidders' type spaces. Our design built upon the WT auction, which boosts revenues beyond VCG by considering for each bidder the weakest type consistent with the auction designer's knowledge. We showed that sufficiently informative type spaces can overcome the well-known impossibility of core-selecting CAs, and gave a revised and generalized impossibility result that depends on whether or not the WT auction is in the core. We then showed that our new family of core-selecting CAs, defined by minimizing revenue on the section of the core above WT prices, minimizes the sum of bidders' incentives to deviate from truthful bidding. This result generalizes those of~\citet{Day07:Fair} and~\citet{day2010optimal} which rely on unrestricted bidder type spaces. On the computational front, we developed new constraint generation techniques for computing WT prices. We compared two formulations, one due to~\citet{balcan2023bicriteria} and a new one based on~\citet{Bikhchandani02:Package} that is a contribution of this paper. Finally, we evaluated our new core-selecting CAs on CATS instances, with synthetic generators for type space constraints. The revenue and incentive benefits of our new CAs, along with their manageable computational overhead, make them a useful addition to the auction design toolkit.

We conclude by discussing avenues for future research. Perhaps the most pressing direction is the development of realistic type space generators by incorporating the specific details of the application domain. Our new CAs display promise on our synthetically-generated type spaces, but to understand their viability in real-world auctions one must develop detailed models of auctioneer knowledge. Generalizing our techniques to type spaces that are not cleanly described by linear constraints is a prerequisite here, as well as accommodating other knowledge structures~\citep{balcan2025increasing}.

A more thorough investigation is needed for the design of $\MRC(\vec{p}^{\WT})$-selecting rules. We introduced three specific ones in this paper (WT nearest, zero nearest, and VCG nearest) that are natural generalizations of vanilla MRC-selecting rules, but as discussed in Section~\ref{sec:alternate_rules} there might be other more economically meaningful rules. A computational study extending~\citet{bunz2022designing} to $\MRC(\vec{p}^{\WT})$-selecting rules is relevant here as well. A promising direction along this vein is to use machine learning to design the reference point, weights, and amplifications of the parameterized rules in~\citet{bunz2022designing}. Explicit equilibrium analysis in the style of~\citet{goeree2016impossibility} and~\citet{ausubel2020core} is important as well.

Our formulations of WT computation were specific to the XOR bidding language. Extensions and modifications of our techniques are needed for other domains and other bidding languages such as those proposed for spectrum auctions~\citep{bichler2023taming,weiss2017sats}, sourcing auctions~\citep{Sandholm02:eMediator,Sandholm13:Very}, and more general domain-independent use~[\citealp{Sandholm99:Algorithm}, \citealp{Nisan00:Bidding}, \citealp{Boutilier01:Bidding}, \citealp{Boutilier02:Solving}]. The interaction between the bidding language of a CA and the language used to express type space knowledge is an unexplored area here as well.

Finally, an important direction within the research strand of {\em mechanism design with predictions}~\citep{balcan2023bicriteria, balkanski2024mechanism} is to relax the assumption that $\vec{v}\in\bTheta$, that is, that type spaces convey {\em correct} information about bidders. How can core-selecting CAs with strong incentive properties be designed using the techniques developed in this paper when type spaces can have small errors? The techniques developed in~\citep{balcan2023bicriteria} in the general setting of revenue-maximizing multidimensional mechanism design will likely be useful here, and can also help shed light on better core selection in mechanism design settings beyond combinatorial auctions.

\subsection*{Acknowledgements}

This material is based on work supported by the NSF under grants IIS-1901403, CCF-1733556, and RI-2312342, the ARO under award W911NF2210266, the Vannevar Bush Faculty Fellowship ONR N00014-23-1-2876, and NIH award A240108S001.

\newpage

\bibliography{references, dairefs}

\begin{thebibliography}{57}
\providecommand{\natexlab}[1]{#1}
\providecommand{\url}[1]{\texttt{#1}}
\expandafter\ifx\csname urlstyle\endcsname\relax
  \providecommand{\doi}[1]{doi: #1}\else
  \providecommand{\doi}{doi: \begingroup \urlstyle{rm}\Url}\fi

\bibitem[Ausubel and Baranov(2020)]{ausubel2020core}
Lawrence Ausubel and Oleg Baranov.
\newblock Core-selecting auctions with incomplete information.
\newblock \emph{International Journal of Game Theory}, 49\penalty0 (1):\penalty0 251--273, 2020.

\bibitem[Ausubel and Baranov(2023)]{ausubel2023vcg}
Lawrence Ausubel and Oleg Baranov.
\newblock The vcg mechanism, the core, and assignment stages in auctions.
\newblock \emph{Working paper}, 2023.

\bibitem[Ausubel and Cramton(2011)]{ausubel2011auction}
Lawrence Ausubel and Peter Cramton.
\newblock Auction design for wind rights.
\newblock \emph{Report to Bureau of Ocean Energy Management, Regulation and Enforcement}, 1, 2011.

\bibitem[Ausubel and Milgrom(2002)]{Ausubel02:Ascending}
Lawrence Ausubel and Paul Milgrom.
\newblock Ascending auctions with package bidding.
\newblock \emph{The BE Journal of Theoretical Economics}, 1\penalty0 (1):\penalty0 1--44, 2002.

\bibitem[Ausubel and Milgrom(2006)]{Ausubel06:Lovely}
Lawrence Ausubel and Paul Milgrom.
\newblock The lovely but lonely {V}ickrey auction.
\newblock In Peter Cramton, Yoav Shoham, and Richard Steinberg, editors, \emph{Combinatorial Auctions}, chapter~1. MIT Press, 2006.

\bibitem[Ausubel et~al.(2017)Ausubel, Aperjis, and Baranov]{ausubel2017market}
Lawrence Ausubel, Christina Aperjis, and Oleg Baranov.
\newblock Market design and the fcc incentive auction.
\newblock \emph{Presentation at the NBER Market Design Meeting}, 2017.

\bibitem[Balcan et~al.(2023)Balcan, Prasad, and Sandholm]{balcan2023bicriteria}
Maria-Florina Balcan, Siddharth Prasad, and Tuomas Sandholm.
\newblock Bicriteria multidimensional mechanism design with side information.
\newblock In \emph{Conference on Neural Information Processing Systems (NeurIPS)}, 2023.

\bibitem[Balcan et~al.(2025)Balcan, Prasad, and Sandholm]{balcan2025increasing}
Maria-Florina Balcan, Siddharth Prasad, and Tuomas Sandholm.
\newblock Increasing revenue in efficient combinatorial auctions by learning to generate artificial competition.
\newblock In \emph{Proceedings of the AAAI Conference on Artificial Intelligence (AAAI)}, volume~39, pages 13572--13580, 2025.

\bibitem[Balkanski et~al.(2024)Balkanski, Gkatzelis, and Tan]{balkanski2024mechanism}
Eric Balkanski, Vasilis Gkatzelis, and Xizhi Tan.
\newblock Mechanism design with predictions: An annotated reading list.
\newblock \emph{ACM SIGecom Exchanges}, 21\penalty0 (1):\penalty0 54--57, 2024.

\bibitem[Batziou et~al.(2022)Batziou, Bichler, and Fichtl]{batziou2022core}
Eleni Batziou, Martin Bichler, and Maximilian Fichtl.
\newblock Core-stability in assignment markets with financially constrained buyers.
\newblock In \emph{Proceedings of the 23rd ACM Conference on Economics and Computation}, pages 473--474, 2022.

\bibitem[Bichler and Waldherr(2017)]{bichler2017core}
Martin Bichler and Stefan Waldherr.
\newblock Core and pricing equilibria in combinatorial exchanges.
\newblock \emph{Economics Letters}, 157:\penalty0 145--147, 2017.

\bibitem[Bichler and Waldherr(2022)]{bichler2022core}
Martin Bichler and Stefan Waldherr.
\newblock Core pricing in combinatorial exchanges with financially constrained buyers: Computational hardness and algorithmic solutions.
\newblock \emph{Operations Research}, 70\penalty0 (1):\penalty0 241--264, 2022.

\bibitem[Bichler et~al.(2013)Bichler, Shabalin, and Wolf]{bichler2013core}
Martin Bichler, Pasha Shabalin, and J{\"u}rgen Wolf.
\newblock Do core-selecting combinatorial clock auctions always lead to high efficiency? an experimental analysis of spectrum auction designs.
\newblock \emph{Experimental Economics}, 16:\penalty0 511--545, 2013.

\bibitem[Bichler et~al.(2023)Bichler, Milgrom, and Schwarz]{bichler2023taming}
Martin Bichler, Paul Milgrom, and Gregor Schwarz.
\newblock Taming the communication and computation complexity of combinatorial auctions: the fuel bid language.
\newblock \emph{Management Science}, 69\penalty0 (4):\penalty0 2217--2238, 2023.

\bibitem[Bikhchandani and Ostroy(2002)]{Bikhchandani02:Package}
Sushil Bikhchandani and Joseph~M. Ostroy.
\newblock The package assignment model.
\newblock \emph{Journal of Economic Theory}, 107:\penalty0 377--406, 2002.

\bibitem[Bikhchandani et~al.(2001)Bikhchandani, de~Vries, Schummer, and Vohra]{Bikhchandani01:Linear}
Sushil Bikhchandani, Sven de~Vries, James Schummer, and Rakesh~V. Vohra.
\newblock Linear programming and {V}ickrey auctions, 2001.
\newblock Draft.

\bibitem[Bosshard et~al.(2017)Bosshard, B{\"u}nz, Lubin, and Seuken]{bosshard2017computing}
Vitor Bosshard, Benedikt B{\"u}nz, Benjamin Lubin, and Sven Seuken.
\newblock Computing bayes-nash equilibria in combinatorial auctions with continuous value and action spaces.
\newblock In \emph{Proceedings of the 26th International Joint Conference on Artificial Intelligence (IJCAI)}, pages 119--127, 2017.

\bibitem[Boutilier(2002)]{Boutilier02:Solving}
Craig Boutilier.
\newblock Solving concisely expressed combinatorial auction problems.
\newblock In \emph{Proceedings of the National Conference on Artificial Intelligence (AAAI)}, pages 359--366, Edmonton, Canada, 2002.

\bibitem[Boutilier and Hoos(2001)]{Boutilier01:Bidding}
Craig Boutilier and Holger Hoos.
\newblock Bidding languages for combinatorial auctions.
\newblock In \emph{Proceedings of the Seventeenth International Joint Conference on Artificial Intelligence (IJCAI)}, pages 1211--1217, Seattle, WA, 2001.

\bibitem[B{\"u}nz et~al.(2015)B{\"u}nz, Seuken, and Lubin]{bunz2015faster}
Benedikt B{\"u}nz, Sven Seuken, and Benjamin Lubin.
\newblock A faster core constraint generation algorithm for combinatorial auctions.
\newblock In \emph{Proceedings of the AAAI Conference on Artificial Intelligence}, volume~29, 2015.

\bibitem[B{\"u}nz et~al.(2022)B{\"u}nz, Lubin, and Seuken]{bunz2022designing}
Benedikt B{\"u}nz, Benjamin Lubin, and Sven Seuken.
\newblock Designing core-selecting payment rules: A computational search approach.
\newblock \emph{Information Systems Research}, 33\penalty0 (4):\penalty0 1157--1173, 2022.

\bibitem[Clarke(1971)]{Clarke71:Multipart}
Ed~H. Clarke.
\newblock Multipart pricing of public goods.
\newblock \emph{Public Choice}, 1971.

\bibitem[Cramton(2013)]{cramton2013spectrum}
Peter Cramton.
\newblock Spectrum auction design.
\newblock \emph{Review of industrial organization}, 42:\penalty0 161--190, 2013.

\bibitem[Cramton et~al.(1987)Cramton, Gibbons, and Klemperer]{cramton1987dissolving}
Peter Cramton, Robert Gibbons, and Paul Klemperer.
\newblock Dissolving a partnership efficiently.
\newblock \emph{Econometrica: Journal of the Econometric Society}, pages 615--632, 1987.

\bibitem[Day and Cramton(2012)]{day2012quadratic}
Robert Day and Peter Cramton.
\newblock Quadratic core-selecting payment rules for combinatorial auctions.
\newblock \emph{Operations Research}, 60\penalty0 (3):\penalty0 588--603, 2012.

\bibitem[Day and Milgrom(2010)]{day2010optimal}
Robert Day and Paul Milgrom.
\newblock Optimal incentives in core-selecting auctions.
\newblock \emph{Handbook of Market Design}, 2010.

\bibitem[Day and Raghavan(2007)]{Day07:Fair}
Robert Day and S.~Raghavan.
\newblock Fair payments for efficient allocations in public sector combinatorial auctions.
\newblock \emph{Management Science}, 53\penalty0 (9):\penalty0 1389--1406, 2007.

\bibitem[de~Vries et~al.(2007)de~Vries, Schummer, and Vohra]{deVries2007ascending}
Sven de~Vries, James Schummer, and Rakesh~V Vohra.
\newblock On ascending vickrey auctions for heterogeneous objects.
\newblock \emph{Journal of Economic Theory}, 132\penalty0 (1):\penalty0 95--118, 2007.

\bibitem[Erdil and Klemperer(2010)]{erdil2010new}
Aytek Erdil and Paul Klemperer.
\newblock A new payment rule for core-selecting package auctions.
\newblock \emph{Journal of the European Economic Association}, 8\penalty0 (2-3):\penalty0 537--547, 2010.

\bibitem[Goel et~al.(2015)Goel, Khani, and Leme]{goel2015core}
Gagan Goel, Mohammad~Reza Khani, and Renato~Paes Leme.
\newblock Core-competitive auctions.
\newblock In \emph{Proceedings of the Sixteenth ACM Conference on Economics and Computation}, pages 149--166, 2015.

\bibitem[Goeree and Lien(2016)]{goeree2016impossibility}
Jacob~K Goeree and Yuanchuan Lien.
\newblock On the impossibility of core-selecting auctions.
\newblock \emph{Theoretical economics}, 11\penalty0 (1):\penalty0 41--52, 2016.

\bibitem[Goetzendorff et~al.(2015)Goetzendorff, Bichler, Shabalin, and Day]{goetzendorff2015compact}
Andor Goetzendorff, Martin Bichler, Pasha Shabalin, and Robert~W Day.
\newblock Compact bid languages and core pricing in large multi-item auctions.
\newblock \emph{Management Science}, 61\penalty0 (7):\penalty0 1684--1703, 2015.

\bibitem[Groves(1973)]{Groves73:Incentives}
Theodore Groves.
\newblock Incentives in teams.
\newblock \emph{Econometrica}, 1973.

\bibitem[Hohner et~al.(2003)Hohner, Rich, Ng, Reid, Davenport, Kalagnanam, Lee, and An]{Hohner03:Combinatorial}
Gail Hohner, John Rich, Ed~Ng, Grant Reid, Andrew~J. Davenport, Jayant~R. Kalagnanam, Ho~Soo Lee, and Chae An.
\newblock Combinatorial and quantity-discount procurement auctions benefit {M}ars, {I}ncorporated and its suppliers.
\newblock \emph{Interfaces}, 33\penalty0 (1):\penalty0 23--35, 2003.

\bibitem[Karaca and Kamgarpour(2019)]{karaca2019core}
Orcun Karaca and Maryam Kamgarpour.
\newblock Core-selecting mechanisms in electricity markets.
\newblock \emph{IEEE Transactions on Smart Grid}, 11\penalty0 (3):\penalty0 2604--2614, 2019.

\bibitem[Klemperer(2010)]{klemperer2010product}
Paul Klemperer.
\newblock The product-mix auction: A new auction design for differentiated goods.
\newblock \emph{Journal of the European Economic Association}, 8\penalty0 (2-3):\penalty0 526--536, 2010.

\bibitem[Krishna and Perry(1998)]{krishna1998efficient}
Vijay Krishna and Motty Perry.
\newblock Efficient mechanism design.
\newblock \emph{Available at SSRN 64934}, 1998.

\bibitem[Leyton-Brown et~al.(2000)Leyton-Brown, Pearson, and Shoham]{Leyton-Brown00:Towards}
Kevin Leyton-Brown, Mark Pearson, and Yoav Shoham.
\newblock Towards a universal test suite for combinatorial auction algorithms.
\newblock In \emph{Proceedings of the ACM Conference on Electronic Commerce (ACM-EC)}, pages 66--76, Minneapolis, MN, 2000.

\bibitem[Leyton-Brown et~al.(2017)Leyton-Brown, Milgrom, and Segal]{leyton2017economics}
Kevin Leyton-Brown, Paul Milgrom, and Ilya Segal.
\newblock Economics and computer science of a radio spectrum reallocation.
\newblock \emph{Proceedings of the National Academy of Sciences}, 114\penalty0 (28):\penalty0 7202--7209, 2017.

\bibitem[Lu et~al.(2024)Lu, Wan, and Zhang]{lu2024competitive}
Pinyan Lu, Zongqi Wan, and Jialin Zhang.
\newblock Competitive auctions with imperfect predictions.
\newblock In \emph{Proceedings of the 25th ACM Conference on Economics and Computation (EC)}, pages 1155--1183, 2024.

\bibitem[Markakis and Tsikiridis(2019)]{markakis2019core}
Evangelos Markakis and Artem Tsikiridis.
\newblock On core-selecting and core-competitive mechanisms for binary single-parameter auctions.
\newblock In \emph{International Conference on Web and Internet Economics (WINE)}, pages 271--285. Springer, 2019.

\bibitem[Moor et~al.(2016)Moor, Seuken, Grubenmann, and Bernstein]{moor2016core}
Dmitry Moor, Sven Seuken, Tobias Grubenmann, and Abraham Bernstein.
\newblock Core-selecting payment rules for combinatorial auctions with uncertain availability of goods.
\newblock In \emph{Proceedings of the Twenty-Fifth International Joint Conference on Artificial Intelligence (IJCAI)}, pages 424--432, 2016.

\bibitem[Myerson and Satterthwaite(1983)]{Myerson83:Efficient}
Roger Myerson and Mark Satterthwaite.
\newblock Efficient mechanisms for bilateral trading.
\newblock \emph{Journal of Economic Theory}, 28:\penalty0 265--281, 1983.

\bibitem[Niazadeh et~al.(2022)Niazadeh, Hartline, Immorlica, Khani, and Lucier]{niazadeh2022fast}
Rad Niazadeh, Jason Hartline, Nicole Immorlica, Mohammad~Reza Khani, and Brendan Lucier.
\newblock Fast core pricing for rich advertising auctions.
\newblock \emph{Operations Research}, 70\penalty0 (1):\penalty0 223--240, 2022.

\bibitem[Nisan(2000)]{Nisan00:Bidding}
Noam Nisan.
\newblock Bidding and allocation in combinatorial auctions.
\newblock In \emph{Proceedings of the ACM Conference on Electronic Commerce (ACM-EC)}, pages 1--12, Minneapolis, MN, 2000.

\bibitem[Othman and Sandholm(2010)]{Othman10:Envy}
Abraham Othman and Tuomas Sandholm.
\newblock Envy quotes and the iterated core-selecting combinatorial auction.
\newblock In \emph{AAAI Conference on Artificial Intelligence (AAAI)}, 2010.

\bibitem[Ott and Beck(2013)]{ott2013incentives}
Marion Ott and Marissa Beck.
\newblock Incentives for overbidding in minimum-revenue core-selecting auctions.
\newblock 2013.

\bibitem[Palacios-Huerta et~al.(2024)Palacios-Huerta, Parkes, and Steinberg]{palacios2024combinatorial}
Ignacio Palacios-Huerta, David~C Parkes, and Richard Steinberg.
\newblock Combinatorial auctions in practice.
\newblock \emph{Journal of Economic Literature}, 62\penalty0 (2):\penalty0 517--553, 2024.

\bibitem[Parkes et~al.(2001)Parkes, Kalagnanam, and Eso]{Parkes01:Achieving}
David Parkes, Jayant Kalagnanam, and Marta Eso.
\newblock Achieving budget-balance with {V}ickrey-based payment schemes in exchanges.
\newblock In \emph{Proceedings of the Seventeenth International Joint Conference on Artificial Intelligence (IJCAI)}, pages 1161--1168, Seattle, WA, 2001.

\bibitem[Rostek and Yoder(2015)]{rostek2015core}
Marzena Rostek and Nathan Yoder.
\newblock Core selection in auctions and exchanges.
\newblock \emph{Working paper}, 2015.

\bibitem[Rostek and Yoder(2023)]{rostek2023reallocative}
Marzena Rostek and Nathan Yoder.
\newblock Reallocative auctions and core selection.
\newblock \emph{Available at SSRN 4450987}, 2023.

\bibitem[Sandholm(1999)]{Sandholm99:Algorithm}
Tuomas Sandholm.
\newblock An algorithm for optimal winner determination in combinatorial auctions.
\newblock In \emph{Proceedings of the 16th International Joint Conference on Artificial Intelligence (IJCAI)}, pages 542--547, Stockholm, Sweden, 1999.
\newblock Extended journal version published in Artificial Intelligence in 2002.

\bibitem[Sandholm(2002)]{Sandholm02:eMediator}
Tuomas Sandholm.
\newblock {eMediator}: {A} next generation electronic commerce server.
\newblock \emph{Computational Intelligence}, 18\penalty0 (4):\penalty0 656--676, 2002.
\newblock Earlier versions: Washington U. tech report WU-CS-99-02 Jan. 1999, AAAI-99 Workshop on AI in Ecommerce, AGENTS-00.

\bibitem[Sandholm(2013)]{Sandholm13:Very}
Tuomas Sandholm.
\newblock Very-large-scale generalized combinatorial multi-attribute auctions: Lessons from conducting \$60 billion of sourcing.
\newblock In Zvika Neeman, Alvin Roth, and Nir Vulkan, editors, \emph{Handbook of Market Design}. Oxford University Press, 2013.

\bibitem[Sandholm et~al.(2006)Sandholm, Levine, Concordia, Martyn, Hughes, Jacobs, and Begg]{Sandholm06:Changing}
Tuomas Sandholm, David Levine, Michael Concordia, Paul Martyn, Rick Hughes, Jim Jacobs, and Dennis Begg.
\newblock Changing the game in strategic sourcing at {P}rocter \& {G}amble: Expressive competition enabled by optimization.
\newblock \emph{Interfaces}, 36\penalty0 (1):\penalty0 55--68, 2006.

\bibitem[Vickrey(1961)]{Vickrey61:Counterspeculation}
William Vickrey.
\newblock Counterspeculation, auctions, and competitive sealed tenders.
\newblock \emph{Journal of Finance}, 1961.

\bibitem[Weiss et~al.(2017)Weiss, Lubin, and Seuken]{weiss2017sats}
Michael Weiss, Benjamin Lubin, and Sven Seuken.
\newblock Sats: A universal spectrum auction test suite.
\newblock In \emph{Proceedings of the 16th Conference on Autonomous Agents and MultiAgent Systems (AAMAS)}, pages 51--59, 2017.

\end{thebibliography}
\bibliographystyle{plainnat}

\end{document}